\newtheorem{prop}{Proposition}
\newtheorem{theorem}{Theorem}
\newtheorem{corollary}{Corollary}
\newtheorem{definition}{Definition}
\newtheorem{lemma}{Lemma}
\begin{document}

\def\spacingset#1{\renewcommand{\baselinestretch}%
{#1}\small\normalsize} \spacingset{1}

\title{Heavy-tailed distribution for combining dependent $p$-values with asymptotic robustness}
\author{
Yusi Fang  \\
Department of Biostatistics \\ University of Pittsburgh, Pittsburgh, PA 15261  \\
email: \texttt{yuf31@pitt.edu}
\and
Chung Chang\footnote{correspond to: cchang@math.nsysu.edu.tw}\\
Department of Applied Mathematics\\
National Sun Yat-sen University,
Kaohsiung, Taiwan\\
email: \texttt{cchang@math.nsysu.edu.tw}
\and 
Yongseok Park \\
Department of Biostatistics \\
University of Pittsburgh, Pittsburgh, PA 15261\\
email: \texttt{yongpark@pitt.edu}
\and
George C. Tseng \\
Department of Biostatistics \\
University of Pittsburgh, Pittsburgh, PA 15261\\
email: \texttt{ctseng@pitt.edu}

}

\maketitle

\newpage
\doublespacing
\mbox{}

\begin{center}
\textbf{Abstract}
\end{center}
    The issue of combining individual p-values to aggregate multiple small effects is a long-standing statistical topic. Many classical methods are designed for combining independent and frequent signals using the sum of transformed p-values with the transformation of light-tailed distributions, in which Fisher’s method and Stouffer’s method are the most well-known. In recent years, advances in big data promoted methods to aggregate correlated, sparse and weak signals; among them, Cauchy and harmonic mean combination tests were proposed to robustly combine p-values under ”arbitrary” dependency structure. Both of the proposed tests are the transformation of heavy-tailed distributions for improved power with the sparse signal. Motivated by this observation, we investigate the transformation of regularly varying distributions, which is a rich family of heavy-tailed distribution, to explore the conditions for a method to possess robustness to dependency. We show that only an equivalent class of Cauchy and harmonic mean tests has sufficient robustness to dependency in a practical sense. We also show an issue caused by large negative penalty in the Cauchy method and propose a simple, yet practical modification with fast computation. Finally, we present simulations and apply to a neuroticism GWAS application to verify the discovered theoretical insights.

\vspace*{.3in}

\noindent\textsc{Keywords}: {$p$-value combination method; combining dependent $p$-values; regularly varying distribution; global hypothesis testing. 
}

\newpage

\section{Introduction}
\label{s:intro}
Combining $p$-values to aggregate information from multiple sources is a long-standing issue in social science and biomedical research. Classical methods mostly focus on combining multiple independent and frequent signals to increase statistical power, which can be viewed as a type of meta-analysis. Consider the combination of $n$ independent $p$-values, $\vec{p}=(p_1,...,p_n)$. Many earlier methods were developed in the form of statistics $T(\vec{p})=\sum_{i=1}^n g(p_i)=\sum_{i=1}^n F^{-1}_U (1-p_i)$ to sum up transformed $p$-values, where the transformation $g(p)$ is the inverse CDF of $U$. Conventional methods in this category include Fisher's method \citep{fisher1932statistical} with $T=\sum_{i=1}^n -2 \log(p_i)$ using $U$ as a chi-squared distribution and Stouffer's method \citep{stouffer1949american} with $T=\sum_{i=1}^n -\Phi^{-1}(p_i)$ using $U$ as a standard normal distribution, among many other choices of $g(p)$ and their corresponding $U$ in the literature \citep{edgington1972additive,pearson1933method,mudholkar1979logit}. This first category of methods aims for classical meta-analysis to combine independent and relatively frequent signals and it applies light-tailed distribution (i.e. tails thinner than an exponential function) for $U$. Efficiency of a method is mostly considered under the asymptotic framework that the number of $p$-values $n$ is fixed and sample size $m$ to derive each $p$-value goes to infinity, where $p=O(e^{-m})$ in most cases. Under this setting, it has been shown that only the equivalent class of Fisher's method is asymptotically Bahadur Optimal (ABO), meaning the efficiency of the combined $p$-value statistics is asymptotically optimal under fixed $n$ and $m\rightarrow\infty$ \citep{littell1971asymptotic}.

In the rise of big data, many scientific questions have turned to combine $p$-values with large $n$. The seminal paper by \cite{donoho2004higher} established a framework of combining $p$-values with weak and sparse signals and proposed the higher-criticism test with asymptotically optimal property. This second category of methods considers $n\rightarrow \infty$ and only a small number $s$ of the $n$ p-values ($s=n^\beta$ where $0<\beta<\frac{1}{2}$) have weak signals ($p=O(n^{-r}/\log^{\frac{1}{2}}n)$ with $0<r<1$) while all remaining $p$-values have no signal (i.e. $p\overset{D}{\sim} unif(0,1)$). Under this setting, the classical minimum $p$-value method ($minP$) $T=\mathop{min}_{1\leq i\leq n} p_i$ is asymptotically optimal only for $0<\beta<1/4$ while higher criticism is asymptotically optimal for all possible $0<\beta<1/2$.  Several methods, including Berk-Jones test \citep{berk1979goodness,li2015higher}, were subsequently proposed to improve finite-sample power of higher-criticism while maintaining the asymptotic efficiency.

All aforementioned methods were developed to combine independent $p$-values. Many modern large-scale data analyses have generated the need of combining a large number of dependent $p$-values with sparse and weak signals, which we categorize as methods for the third category. A notable application is to combine $p$-values of multiple correlated SNPs (can be tens to hundreds or thousands) in a SNP-set (e.g. all SNPs in a gene region or in gene regions of a pathway) in genome-wide association studies (GWAS). In this case, the neighboring SNPs often pose varying degrees and unknown dependency structures. Efforts have been made to extend existing tests to account for dependency using permutation or other numerical simulation approaches \citep[e.g.][]{liu2019accurate}. Permutation or simulation-based methods are, however, not practical when $n$ is large and high precision of $p$-value is needed to account for multiple comparison. \cite{barnett2017generalized} developed an analytic approximation for higher criticism incorporated with dependency structure. The method is, however, still computationally intensive and not accurate enough for small $p$-values needed for multiple comparison. Motivated by these needs, \cite{liu2020cauchy} and \cite{wilson2019harmonic} independently proposed Cauchy combination test ($T=\sum_{i=1}^n \tan\{(0.5-p_i)\pi\}$) and harmonic mean combination test ($T=\sum_{i=1}^n \frac{1}{p_i}$) to combine $p$-values under unknown dependency structure. A remarkable property of both methods is that the null distribution and testing procedure derived from independence assumption are robust under dependency structure in an asymptotic but practical sense to be explained later. In this paper, we set out to explore a rich family of transformation $g(p)$ from their corresponding $U$ (i.e., the regularly varying distribution family) and investigate the conditions such that practical robustness to dependency similar to Cauchy and harmonic mean methods can be achieved. We note that selections of $U$ for classical meta-analysis setting (fixed $n$ and $m\rightarrow\infty$) are all from thin-tailed distributions (e.g. chi-squared distribution for Fisher's method and Gaussian for Stouffer's method). This is reasonable since a thin-tailed distribution produces evener contributions from marginally significant $p$-values in the meta-analysis of frequent signals. In contrast, Cauchy and harmonic mean methods correspond to heavy-tailed distributions of $U$, which highly focus on small $p$-values and down-weigh marginally significant $p$-values. Figure \ref{transformations} shows the transformation function of $g(p)$ in log-scale. For Fisher's method, the contributions of $p$-values $10^{-2}$ and $10^{-6}$ to the test statistics are $4.6$ and $13.8$. For heavy-tailed transformation methods, the contributions become $100$ versus $10^6$ for harmonic mean and $31.82052$ versus $3.18\times10^5$ for Cauchy. With an increased focus on small $p$-values, the methods are more powerful in detecting sparse signals. It is worth noting that the recent work by \cite{vovk2020combining} also considered the sum of transformed $p$-values to combine $p$-values and showed an upper bound of significance level inflation under general dependence structure. We will describe the difference between our results and theirs in detail in the remark following Theorem \ref{thm2}.

Throughout this paper, when we call a thin-tailed, heavy-tailed or regularly varying method, it means that its corresponding $U$ is a thin-tailed, heavy-tailed or regularly varying distribution. The paper is structured as the following. We first investigate Box-Cox transformation for $g(p)$ in Section \ref{s:connection}, which is equivalent to Pareto distribution for $U$. In Section \ref{section2.1}, we will build connection and insight of existing methods including $minP$, harmonic mean, Cauchy and Fisher in this framework. Particularly, we show that the Cauchy method is approximately equivalent to the harmonic mean method, which is a special case of the Box-Cox transformation. In Section \ref{section2.2}, we observe that the Cauchy method can potentially suffer from the large negative penalty for $p$-values close to 1. We introduce a simple, yet practical solution using truncated Cauchy with fast computing. In Section \ref{section3}, we will introduce a family of heavy-tailed distribution, namely regularly varying distribution, and investigate the conditions in the family that can provide robustness for dependency structure as in Cauchy and harmonic mean (Section \ref{section3.1}-\ref{section 3.2}). Section \ref{section3.3} shows the asymptotic power and detection boundary under the sparse and weak alternatives considered in \cite{donoho2004higher}. Section \ref{section4} contains extensive simulations to demonstrate type I error control and power of different methods and numerically verify the theoretical results. Section \ref{section5} contains a GWAS application of neuroticism to compare the performance of different methods and demonstrate the improvement of the truncated Cauchy method over the Cauchy method. Section \ref{section6} provides the final conclusion and discussion.

\section{Connection between minP, harmonic mean, Cauchy and Fisher}
\label{s:connection}
\subsection{Methods by Pareto distribution to connect four existing methods}\label{section2.1}
As mentioned in Section \ref{s:intro}, we observe that many methods for the first category to combine independent and relatively frequent $p$-values all correspond to thin-tailed distributions for $U$ and many methods for the second and third categories for combining sparse and weak signals utilize heavy-tailed distributions. In this subsection, we consider Pareto distribution for $U$, which is equivalent to Box-Cox transformation for $g(p)$. We will build the connection of four existing methods: $minP$, harmonic mean, Cauchy and Fisher, based on this transformation family. Insight in Pareto distribution also provides intuition when we introduce the regularly varying distribution as an extended richer family in the next section. Finally, we will prove the approximate equivalency of the harmonic mean and Cauchy combination methods.
Consider the family of $p$-value combination methods: $T=\sum_{i=1}^n g(p_i)$, where $g(p)= \frac{1}{p^{\eta}}$ for some $\eta>0$. We can show that $g(p)=F^{-1}_U(1-p)$ such that $U\overset{D}{\sim} Pareto(\frac{1}{\eta}, 1)$. In other words, $P(U>t)=t^{-\frac{1}{\eta}}$ for $t>1$, which means $U$ is a heavy-tailed distribution. A larger $\eta$ corresponds to a heavier tail. Particularly, the harmonic mean method corresponds to $\eta=1$ in Pareto distribution. We note that, by denoting $\lambda=-\eta$, we can rewrite $h(p; \lambda)=\frac{g(p;\eta)-1}{\lambda}=\frac{p^\lambda-1}{\lambda}$, which is Box-Cox transformation. The following Proposition \ref{prop1} shows that minP and Fisher are limiting cases in the Pareto distribution when $\eta\rightarrow +\infty$ and when $\eta\rightarrow 0$. Proposition \ref{prop2} shows that the Cauchy combination method is approximately identical to harmonic mean for relatively small $p$-values.

\begin{prop}\label{prop1}
For fixed $n$, $minP$ is a limiting case of methods by Pareto distribution when $\eta\rightarrow \infty$. Similarly, the Fisher's method is the limiting case of Pareto when $\eta\rightarrow 0$.
\end{prop}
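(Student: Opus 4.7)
The plan is to exploit a simple, paper-independent fact: two test statistics related by a strictly monotone transformation induce identical rejection regions, so they define the same test. Hence for each of the two limits I do not need the raw statistic $T_\eta = \sum_{i=1}^n p_i^{-\eta}$ to converge (it blows up or collapses); it suffices to produce a monotone rescaling of $T_\eta$ that converges pointwise on $(0,1)^n$ to the minP statistic (as $\eta\to\infty$) or to Fisher's statistic (as $\eta\to 0^+$). I would open the proof by stating this reduction explicitly so that the word ``limiting'' is unambiguous.

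For the case $\eta\to\infty$, I would apply the standard $\ell^\eta\to\ell^\infty$ sandwich. For any fixed $\vec{p}=(p_1,\dots,p_n)\in(0,1)^n$,
\[
\max_{1\le i\le n} p_i^{-\eta}\;\le\;\sum_{i=1}^n p_i^{-\eta}\;\le\;n\cdot\max_{1\le i\le n} p_i^{-\eta}.
\]
Taking the $\eta$-th root and using $n^{1/\eta}\to 1$ yields $T_\eta^{1/\eta}\to 1/\min_i p_i$. Because $t\mapsto t^{1/\eta}$ is strictly increasing on $(0,\infty)$, $T_\eta^{1/\eta}$ and $T_\eta$ generate the same test for each $\eta$, so in the limit this test coincides with the one based on $1/\min_i p_i$, which is the minP test.

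For the case $\eta\to 0^+$, I would use the Box-Cox rescaling already flagged in the surrounding text. Set
\[
\widetilde{T}_\eta \;=\; \frac{T_\eta-n}{\eta}\;=\;\sum_{i=1}^n \frac{p_i^{-\eta}-1}{\eta},
\]
which is, for each fixed $\eta>0$, a strictly increasing affine function of $T_\eta$ and so generates the same test. Using $p^{-\eta}=e^{-\eta\log p}=1-\eta\log p+O(\eta^2(\log p)^2)$ for each fixed $p\in(0,1)$, each summand converges to $-\log p_i$, hence
\[
\widetilde{T}_\eta \;\longrightarrow\; -\sum_{i=1}^n \log p_i \;=\; \tfrac{1}{2}\,T_{\text{Fisher}},
\]
where $T_{\text{Fisher}}=-2\sum_i\log p_i$. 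Strict monotonicity again transfers this to equality of tests in the limit.

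The main obstacle is conceptual rather than computational: both limits are trivial once one realises that ``limiting case'' must be read at the level of test (equivalent under strictly monotone rescaling), not at the level of the raw statistic. With that reduction, the two cases collapse to a one-line $\ell^\eta$--$\ell^\infty$ sandwich and a one-line Taylor expansion, respectively; the rest of the write-up is just verifying that the rescalings chosen are genuinely monotone in $T_\eta$ so that the tests agree for every $\eta$, not only in the limit.
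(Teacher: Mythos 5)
Your proposal is correct and follows essentially the same route as the paper: the paper also passes to the monotone rescalings $T_\eta^{1/\eta}$ (written as $\frac{1}{p_{(1)}}\bigl(\sum_i (p_{(1)}/p_{(i)})^{\eta}\bigr)^{1/\eta}$, which is your $\ell^\eta$ sandwich in disguise) and $\sum_i\frac{p_i^{-\eta}-1}{-\eta}$ (your Box--Cox rescaling up to sign, with L'Hospital in place of your Taylor expansion). Your write-up is marginally more careful in making the ``equivalence of tests under strictly monotone transformation'' reduction explicit, but the substance is identical.
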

\begin{proof}
Denote by $T_{\gamma_m}=\sum_{i=1}^n \frac{1}{p_i^{\gamma_m}}=\sum_{i=1}^n \frac{1}{p_{(i)}^{\gamma_m}}$, where $p_{(i)}$’s are ordered $p$-values. Note that $T_{\gamma_{m}}$ is equivalent to $T_{\gamma_m}^*=\left(\sum_{i=1}^n \frac{1}{p_i^{\gamma_m}}\right)^{\frac{1}{\gamma_m}}=\frac{1}{p_{(1)}}\left(\sum_{i=1}^n \left(\frac{p_{(1)}}{p_{(i)}}\right)^{\gamma_m}\right)^{\frac{1}{\gamma_m}}$. As $\gamma_m\rightarrow\infty$, $T_{\gamma_m}^*\rightarrow \frac{1}{p_{(1)}}$, which is equivalent to $minP$.

To prove the result of Fisher's method, note that $T_{\gamma_m}$ is equivalent to $T_{\gamma_m}^{**}=\sum_{i=1}^n\frac{p_i^{-\gamma_m}-1}{-\gamma_m}$. By L'Hospital's rule , we have $\lim_{\gamma_m\rightarrow 0}\frac{p ^{-\gamma_m}-1}{-\gamma_m}=\log (p)$. Hence $T_{\gamma_m}^{**}\rightarrow \sum_{i=1}^n \log(p_i)$ almost surely and is equivalent to the Fisher's method.
\end{proof}

\begin{prop}\label{prop2}
The Cauchy combination test is approximately identical to harmonic mean for relatively small $p$-values in the sense that, $ \frac{\pi \cdot g^{(CA)}(p)-g^{(HM)}(p)}{ g^{(HM)}(p)}=O(p^2)$.
\end{prop}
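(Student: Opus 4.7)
The plan is to reduce the statement to a small-$p$ asymptotic expansion of the Cauchy transformation $g^{(CA)}(p) = \tan\{(0.5-p)\pi\}$ and compare it term-by-term against $g^{(HM)}(p)=1/p$. The key move is to rewrite the tangent via the complementary-angle identity
\[
\tan\{(0.5-p)\pi\} = \tan(\pi/2 - p\pi) = \cot(p\pi) = \frac{\cos(p\pi)}{\sin(p\pi)},
\]
which converts the problem into analyzing $\cot(p\pi)$ near $p=0$, a setting where Taylor expansion is clean.

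Next I would plug in the standard expansions $\sin(p\pi) = p\pi - (p\pi)^3/6 + O(p^5)$ and $\cos(p\pi) = 1 - (p\pi)^2/2 + O(p^4)$, factor $p\pi$ out of the denominator, and use the geometric series $1/(1-x) = 1 + x + O(x^2)$ applied to $x = (p\pi)^2/6 + O(p^4)$. Collecting terms should yield
\[
\pi \cot(p\pi) = \frac{1}{p} - \frac{\pi^2 p}{3} + O(p^3),
\]
so that $\pi\cdot g^{(CA)}(p) - g^{(HM)}(p) = -\pi^2 p/3 + O(p^3)$.

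Dividing by $g^{(HM)}(p) = 1/p$ multiplies the right-hand side by $p$, giving the claimed bound $O(p^2)$ on the relative error. There is no genuine obstacle here; the entire argument is a careful Taylor-series bookkeeping exercise, and the only point that warrants mild care is keeping enough terms in both $\sin$ and $\cos$ so that the $O(p^2)$ coefficient comes out correctly (a naive first-order expansion of only $\sin$ would mistakenly produce $O(p)$). A short closing remark can note that the negative next-order term $-\pi^2 p/3$ also foreshadows the large-negative-penalty behavior discussed in Section \ref{section2.2}, which explains why the Cauchy and harmonic mean tests agree for small $p$ but diverge qualitatively for $p$ close to $1$.
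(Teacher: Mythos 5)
Your proposal is correct and follows essentially the same route as the paper: the paper's proof simply cites the Taylor (Laurent) expansion $\tan\{(0.5-p)\pi\}=\cot(p\pi)\approx\frac{1}{\pi p}-\frac{\pi p}{3}-\frac{(\pi p)^3}{45}+\cdots$ and concludes, while you spell out the cotangent identity and the series division that justify it. No gap; your version is just a more explicit write-up of the same one-line argument.
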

\begin{proof}
By Taylor's expansion, $g^{(CA)}(p)=\tan\left\{(0.5-p)\pi\right\}\approx\frac{1}{\pi p}-\frac{\pi p}{3}-\frac{(\pi p)^3}{45}+\cdots$. The result immediately follows.
\end{proof}

It is somewhat surprising that even though the forms of transformation of Cauchy and harmonic mean are quite different, they are approximately equivalent and the behavior of both can be characterized by the index $\eta=1$ of the Box-Cox transformation. It is natural to ask if there exist other $p$-value combination methods in an extended rich heavy-tailed distribution family to enjoy similar finite-sample robustness property as in the Cauchy and harmonic mean methods. To answer this question, we introduce the family of regularly varying distribution and investigate the properties in Section \ref{section3}.

Figure \ref{transformations} shows minus log-scaled $p$ transformation $g(p)$ versus minus log-scaled transformation $g(p)$ for the $BC_{0.5}$ (i.e. Box-Cox transformation with $\eta=0.5$), $HM$ (the harmonic mean method, equivalent to $BC_1$), $CA$ (the Cauchy method), $BC_{1.5}$, Fisher's and Stouffer's methods. We see that as $\eta$ increases, smaller $p$-values will be more dominant and impact of marginally significant $p$-values rapidly diminishes, which gives stronger power for sparse signal applications. $CA$ and $HM$ are approximately proportional when $p$ sufficiently small (roughly when $p<10^{-2}$).

\begin{figure}
\includegraphics[scale=0.45]{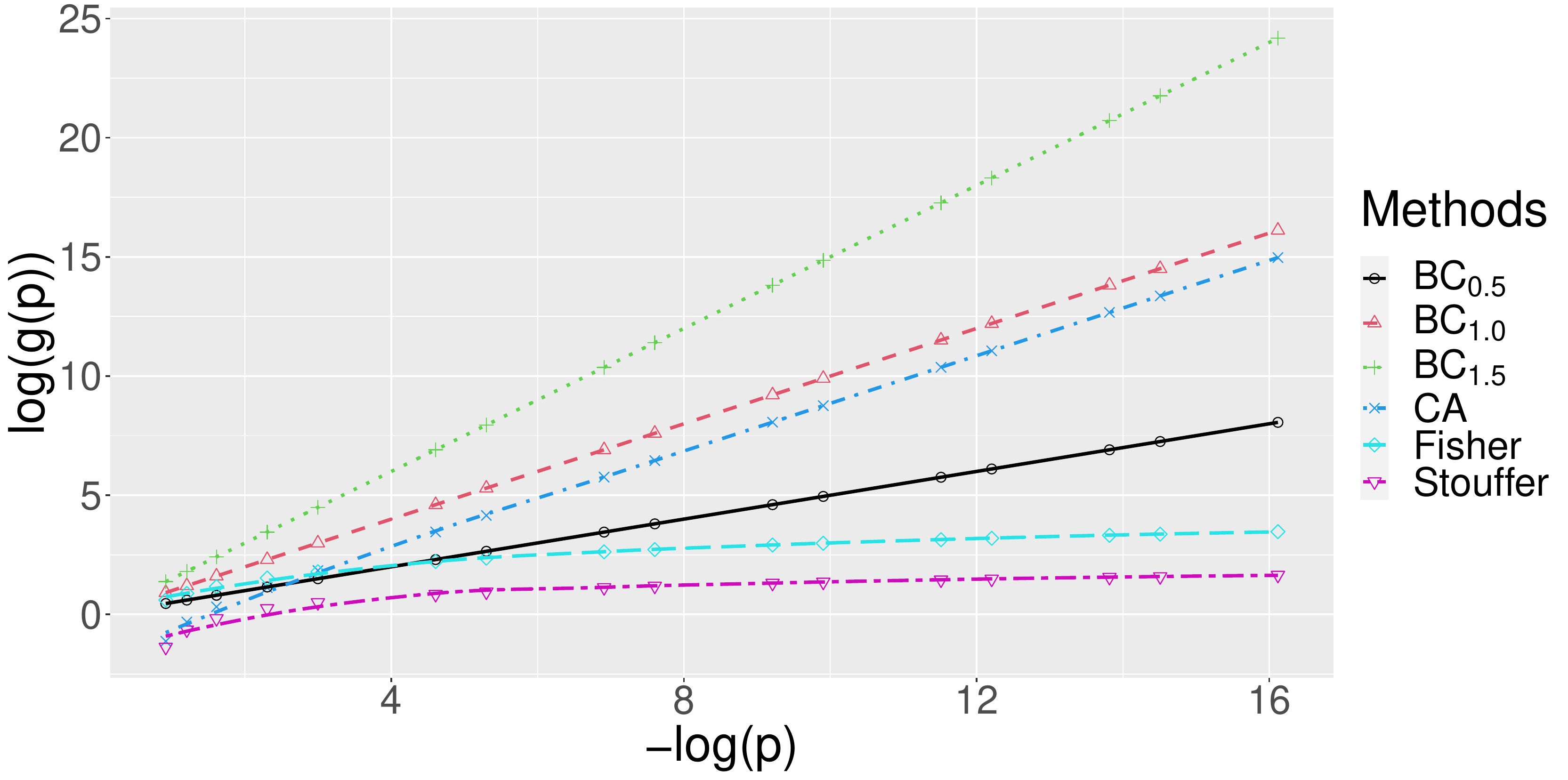}
\caption{
Comparison of transformations. We show 6 different transformations of $p$-values, $g(p)$, which correspond to $BC_{0.5}$, $BC_{1}$ ($HM$), $BC_{1.5}$, $CA$, Fisher  and Stouffer. The x-axis is $-\log(p)$ and the y-axis shows $\log(g(p))$.
}
\label{transformations}
\end{figure}

\subsection{Large negative penalty issue in Cauchy and a truncated Cauchy modification}\label{section2.2}
We have shown that $HM$ and $CA$ are approximately equivalent and simulations in Section \ref{section4.2} will confirm their almost identical performance. We note that when a $p$-value is very close to $1$, the contribution in the Cauchy method is close to negative infinity, which can potentially cause numerical issues and substantial power loss. The situation of a $p$-value closes to $1$ can happen frequently for tests of discrete data, in which case the $p$-values under null hypothesis may not necessarily be $unif(0,1)$.Two other possible situations to cause $p$-values close to $1$ are when $n$ is large or when the model to derive $p$-values are mis-specified.
As a simple remedy, we propose a truncated Cauchy test ($CA^{tr}$) that truncates any of the $n$ $p$-values greater than $1-\delta$  to be $1-\delta$. For example, when $\delta=0.01$, we have $p^{tr}= p$ if $p<0.99$ and $p^{tr}= 0.99$ if $p\geq 0.99$. The proposed method can also be viewed in the form of summation of transformed $p$-values. Indeed, the statistic of $CA^{tr}$ can be written as:
\begin{align}
T_{CA^{tr}}=\sum_{i=1}^n \tan\left(\pi\left(\frac{1}{2}-p_i\right)\right)1(p_i<1-\delta)+\tan\left(\pi\left(\delta-\frac{1}{2}\right)\right)1(p_i\geq 1-\delta).   \notag
\end{align}

The theorems to be introduced in Section \ref{section3} imply that $CA^{tr}$ enjoys almost the same advantages of the Cauchy method in terms of type I error control and power for the detection of weak and sparse signals.
The test statistic of $CA^{tr}$ no longer follows the standard Cauchy distribution under the null assumption. To deal with the computational issue of the truncated Cauchy method, we propose a hybrid strategy, which uses approximation by generalized central limit theorem (GCLT) in general but switches to an efficient importance sampling procedure by cross-entropy parameter selection when $n$ is small ($n<25$) and the targeted size is large ($\alpha\geq 5\times 10^{-3}$).  

Below we first show that when $n$ is sufficiently large, we can apply generalized central limit theorem (GCLT) from \cite{shintani2018super} to approximate the null distribution of $T_{CA^{tr}}$ below.  

\begin{prop}\label{GCLT}
Let $\nu_\delta=\tan\left(\pi(\delta-\frac{1}{2})\right)$, $f_{1n}=\int_{\nu_\delta}^{+\infty}\frac{\cos(x/n)}{(1+x^2)}$, $f_{2n}=\int_{\nu_\delta}^{+\infty}\frac{\sin(x/n)}{(1+x^2)}$ and $\theta_n=\arctan\left(\frac{\delta\sin(\nu_\delta/n)+((1-\delta)/\pi)f_{2n}}{\delta\cos(\nu_\delta/n)+((1-\delta)/\pi)f_{1n}}\right)$. Then we have:
\begin{align}
    \frac{T_{CA^{tr}}-n^2\theta_n}{n}\stackrel{d}{\longrightarrow} S(1,1,\frac{1}{2},0),\notag
\end{align}
where $S(\alpha,\beta,\gamma,\mu)$ is a stable distribution with parameters $\alpha=1,\beta=1,\gamma=\frac{1}{2}$ and $\mu=0$,
which is defined with its characteristic function as: 
\begin{align}
  S(x ; \alpha, \beta, \gamma, \mu)=\frac{1}{2 \pi} \int_{-\infty}^{\infty} \phi(t) e^{-i x t} \mathrm{~d} t , \notag
\end{align}
with $\phi(t) =\exp \left\{i \mu t-\gamma^{\alpha}|t|^{\alpha}(1-i \beta \operatorname{sgn}(t) w(\alpha, t))\right\} $ and 
\begin{align*}
w(\alpha, t)=\left\{\begin{array}{ll}
\tan (\pi \alpha / 2) & \text { if } \alpha \neq 1 \\
-2 / \pi \log |t| & \text { if } \alpha=1.
\end{array}\right..
\end{align*}
\end{prop}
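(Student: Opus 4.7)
The plan is to compute the characteristic function of a single summand and then invoke the generalized central limit theorem for stable laws with index $\alpha = 1$. Under $H_0$, $p_i \overset{\text{iid}}{\sim} U(0,1)$, so each summand
\begin{equation*}
X_i \;=\; \tan\!\bigl(\pi(\tfrac{1}{2}-p_i)\bigr)\cdot 1(p_i<1-\delta) \;+\; \nu_\delta\cdot 1(p_i\geq 1-\delta)
\end{equation*}
has characteristic function
\begin{equation*}
\phi_X(t) \;=\; \delta\,e^{it\nu_\delta} \;+\; \frac{1}{\pi}\int_{\nu_\delta}^{\infty}\frac{e^{itu}}{1+u^2}\,du,
\end{equation*}
via the substitution $u = \tan(\pi(1/2-p))$ with Jacobian $dp = -du/[\pi(1+u^2)]$. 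Evaluating at $t = 1/n$ recovers $f_{1n}$ and $f_{2n}$ as the real and imaginary parts of the continuous piece (up to the $(1-\delta)/\pi$ normalization used in the statement), so that $\theta_n$ is recognizable as $\arg\phi_X(1/n)$ to leading order.

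Next I would verify that $X_i$ lies in the normal domain of attraction of a maximally right-skewed ($\beta = 1$) $\alpha = 1$-stable law with scale $\gamma = 1/2$: the right tail satisfies $P(X_i > x) \sim 1/(\pi x)$ as $x \to \infty$ (inherited from the Cauchy density $1/[\pi(1+u^2)]$ at infinity), while the left tail is bounded at $\nu_\delta$; the tail constant $c = 1/\pi$ produces the scale $\gamma = \pi c / 2 = 1/2$ under the standard GCLT normalization. By L\'evy's continuity theorem it then suffices to show
\begin{equation*}
n\log\phi_X(s/n) \;-\; isn\theta_n \;\longrightarrow\; -\tfrac{1}{2}|s| \;-\; \tfrac{i}{\pi}\,s\log|s|,
\end{equation*}
which is the logarithm of the characteristic function of $S(1,1,1/2,0)$. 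Since $\phi_X(s/n) = 1 + O(1/n)$, I may replace $\log\phi_X(s/n)$ by $\phi_X(s/n) - 1$ modulo an $O(1/n^2)$ error and expand the oscillatory integral on $(\nu_\delta,\infty)$; the classical evaluations $\int_0^\infty u^{-1}\sin(us/n)\,du = (\pi/2)\operatorname{sgn}(s)$ and $\int_0^A u^{-1}(\cos(us/n)-1)\,du = -\log|s/n| + O(1)$ yield the $-|s|/2$ term and the $is\log|s/n|$ term respectively, with all remaining contributions linear in $s$ and exactly absorbed by the centering $-isn\theta_n$.

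The main obstacle is the $\alpha = 1$ case of the GCLT itself: because $E|X_i| = \infty$, no constant centering works, so $\theta_n$ must be defined through the first-order term of the CF expansion and carries an unavoidable $\log n$ correction. This is precisely why $\theta_n$ in the statement is an $n$-dependent arctangent of $n$-dependent integrals rather than a fixed number, and why $w(1,t) = -2\pi^{-1}\log|t|$ appears in the stable characteristic function. The delicate bookkeeping step is to match the $\log n$ extracted from $\log|s/n| = \log|s| - \log n$ against the $\log n$ hidden in $n\theta_n$ so that the centering $n^2\theta_n$ comes out exactly right. Once this reconciliation is complete, convergence of characteristic functions follows from L\'evy's continuity theorem; alternatively, once the tail condition $P(X_i > x) \sim 1/(\pi x)$ is verified, one may invoke directly the super-GCLT of Shintani and Umeno (2018) cited in the statement.
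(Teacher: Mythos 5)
Your proposal is essentially the paper's own proof: the paper's entire argument for this proposition is the remark that it "can be obtained by simple calculation using formula (4) in Shintani and Umeno (2018)," i.e., compute the characteristic function of a single truncated-Cauchy summand, read off $\theta_n$ as (essentially) $\arg\phi_X(1/n)$, verify the one-sided tail $P(X_i>x)\sim 1/(\pi x)$ giving $\beta=1$ and $\gamma=1/2$, and invoke the $\alpha=1$ super-GCLT — exactly the route you describe, including the same reference. One caveat on your optional by-hand verification: the kernel in the oscillatory integral is $1/(1+u^2)\sim u^{-2}$ rather than $u^{-1}$, so the relevant classical evaluations are $\int_0^\infty\bigl(1-\cos(vt)\bigr)v^{-2}\,dv=\pi|t|/2$ (the real/cosine part, which produces the $-|s|/2$ term) and $\int_c^\infty \sin(vt)\,v^{-2}\,dv=-t\log t+O(t)$ (the imaginary/sine part, which produces the $s\log|s/n|$ term); you have the weights and the cosine/sine assignment reversed, though this is a fixable slip in the sketch rather than a gap in the argument.
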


\noindent \underline{Remark}: 

Proposition \ref{GCLT} can be obtained by simple calculation using formula ($4$) in \cite{shintani2018super}. Table S1 examines the approximation performance of GCLT for small $n$ and varying size $\alpha$. The result shows satisfying accuracy when $\alpha<5\times 10^{-3}$. When $\alpha \geq 5\times 10^{-3}$, GCLT needs larger $n$ to perform well (roughly $n\geq25$). As a result, we develop an efficient importance sampling procedure for this scenario. Briefly, Proposition \ref{upperboundtrCAU} below gives narrow upper and lower bounds for the tail probability of truncated Cauchy. By applying the framework proposed by \cite{de2005tutorial} for estimating rare event probability, we develop a cross-entropy procedure to search within the narrow bounds for a high-precision approximation for the tail probability of the truncated Cauchy. Details of the efficient importance sampling are shown in Supplement Section S2.2. Table S1 further shows the accurate calculation of the importance sampling with affordable computing when $n<25$. In summary, when calculating p-values for $CA^{tr}$, to balance the computing and performance, we propose to set $\delta=0.01$ and use GCLT approximation when $\alpha< 5\times 10^{-3}$ or $n\geq 25$. When $\alpha\geq 5\times 10^{-3}$ and $n< 25$, importance sampling will be used. In Section \ref{section4.3} and Section \ref{section5}, we will demonstrate the superior performance of truncated Cauchy over Cauchy using simulations and a real application. Specifically, it avoids the large negative penalty issue of the Cauchy method but still enjoys similar robust properties for type I error control under dependency and power for detecting weak and sparse signals.  

\begin{prop}\label{upperboundtrCAU}
Let $1-\delta$ be the truncation point of truncated Cauchy test. The upper tail probability of the null distribution of the truncated Cauchy method satisfies:
\begin{align}
P\left(X_1\geq t\right)\leq P\left(T_{CA^{tr}}>t\right)\leq P\left(X_1\geq t\right)\left(1+\delta\right)^n,   \notag
\end{align}
where $X_1$ is a Cauchy distributed random variable.
\end{prop}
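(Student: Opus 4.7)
The plan is to identify, under the global null with independent $p$-values, each $\tan(\pi(1/2-p_i))$ with a standard Cauchy variable $Z_i$, so that $T_{CA^{tr}}\stackrel{d}{=}\sum_{i=1}^n W_i$ with $W_i=\max(Z_i,\nu_\delta)$; by $1$-stability of the Cauchy family, $\sum_{i=1}^n Z_i$ is Cauchy$(0,n)$-distributed, and I identify this sum with the $X_1$ in the statement so that $P(X_1\geq t)=P(\sum_i Z_i>t)$.

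The lower bound is immediate from a pointwise coupling: $W_i\geq Z_i$ almost surely, hence $\sum W_i\geq\sum Z_i$ and $P(T_{CA^{tr}}>t)\geq P(X_1\geq t)$.

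For the upper bound I would first establish the measure inequality
\[
P(W_i\in A)=P(Z_i\in A,\,Z_i>\nu_\delta)+\delta\,\mathbf{1}\{\nu_\delta\in A\}\;\leq\;\mu_{Z_i}(A)+\delta\,\delta_{\nu_\delta}(A)
\]
for every Borel $A\subseteq\mathbb{R}$, i.e., the law of $W_i$ is dominated by the Cauchy law plus an atom of mass $\delta$ at $\nu_\delta$. Convolving $n$ independent copies and expanding the binomial yields
\[
\mu_{\sum_{i=1}^n W_i}\;\leq\;\bigl(\mu_{Z_i}+\delta\,\delta_{\nu_\delta}\bigr)^{\ast n}=\sum_{k=0}^n\binom{n}{k}\delta^k\,\mu_{\sum_{j=1}^{n-k}Z_j}\ast\delta_{k\nu_\delta},
\]
so that, evaluated on the tail event $(t,\infty)$,
\[
P(T_{CA^{tr}}>t)\;\leq\;\sum_{k=0}^n\binom{n}{k}\delta^k\,P\!\left(\sum_{j=1}^{n-k}Z_j>t-k\nu_\delta\right).
\]
By $1$-stability again, $\sum_{j=1}^{n-k}Z_j\stackrel{d}{=}(n-k)Z_1$, so the $k$-th term equals $P\bigl(Z_1>(t-k\nu_\delta)/(n-k)\bigr)$. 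Because $\nu_\delta<0$, for any $t\geq n\nu_\delta$ a direct check gives $(t-k\nu_\delta)/(n-k)\geq t/n$, so by monotonicity of the survival function each term is bounded by $P(Z_1>t/n)=P(X_1\geq t)$, and the sum collapses to $(1+\delta)^n P(X_1\geq t)$.

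The main obstacle is the choice of dominating measure: a naive conditioning argument that rescales each summand's law by $P(Z>\nu_\delta)=1-\delta$ produces only the weaker bound $P(T_{CA^{tr}}>t)\leq (1-\delta)^{-n}P(X_1\geq t)$, which is strictly larger than $(1+\delta)^n$. Recovering the sharp factor requires absorbing the displaced mass into a single atom at $\nu_\delta$ (so the dominator has total mass $1+\delta$), and then invoking the $1$-stability of the Cauchy law so that tails of sums of different lengths can be compared on the common scale $t/n$ via $(t-k\nu_\delta)/(n-k)\geq t/n$; without stability the last comparison would fail.
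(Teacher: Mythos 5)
Your proof is correct and is essentially the paper's argument in different clothing: the binomial expansion of $(\mu_{Z}+\delta\,\delta_{\nu_\delta})^{\ast n}$ is exactly the paper's decomposition of the event by the number $j$ of truncated coordinates, with the same ingredients (an atom of mass $\delta=P(Z<\nu_\delta)$ at $\nu_\delta$, Cauchy $1$-stability, the monotonicity step $(t-k\nu_\delta)/(n-k)\ge t/n$, and the binomial theorem). You also make explicit the trivial coupling $W_i\ge Z_i$ for the lower bound, which the paper leaves unstated.
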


\section{Asymptotic properties of regularly varying methods for $p$-value combination}\label{section3}
\subsection{Regularly varying tailed distribution}\label{section3.1}
Before introducing the regularly varying distributions, we first define some notations.
Throughout this paper, denote by $\bar{F}$ the survival function of the distribution $F$ (i.e., $\bar{F}(t)=1-F(t)$ for any $t$). Limits and asymptotic properties are assumed to be for $t\rightarrow \infty$ unless mentioned otherwise.  For
two positive functions $u(\cdot)$ and $v(\cdot)$, we write $u(t)\sim v(t)$ if $\lim_{t\rightarrow \infty}\frac{u(t)}{v(t)} = 1$.
Also, if $\lim_{t\rightarrow \infty}\frac{u(t)}{v(t)}>1$, we write $u(t)\gtrsim v(t)$; if $\lim_{t\rightarrow \infty}\frac{u(t)}{v(t)}<1$, we write $u(t) \lesssim v(t)$. The definition of regularly varying tailed distribution is given below:
\begin{definition} A distribution $F$ is said to belong to the regularly varying tailed family with index $\gamma$ (denoted by $F\in R_{-\gamma}$) if
$$\lim_{x\rightarrow\infty}\frac{\bar{F}(xy)}{\bar{F}(x)}=y^{-\gamma}$$
for some $\gamma > 0$ and all $y>0$. 
\end{definition}
We denote the whole family of regularly varying tailed distributions as $R$.
It can be shown that every distribution $F$ belonging to $R_{-\gamma}$ can be characterized
by $$\bar{F}(t)\sim L(t)t^{-\gamma},$$
where $L(t)$ is a slowly varying function. A function $L$ is called slowly varying if $\lim_{y\rightarrow\infty}\frac{L(ty)}{L(y)}=1$ for any $t>0$.
Some examples of slowly varying functions $L(t)$ are $1, \ln(t)^{\nu}, \ln(\ln(t))$. Given the property of slowly varying function $L(t)$, the tail of regularly varying distribution converges to zero at a relatively slow rate, which leads to the heavy-tailed property.  

The regularly varying tailed family includes many interesting distributions: Pareto distribution, Cauchy distribution, log-gamma distribution and inverse gamma distribution. Indeed, the survival function of Pareto(a,b)
is $\bar{F}(t)=\frac{b}{t^a},\;t>b$ and hence
$U\in R_{-a}$. In addition,
the survival function of Cauchy distribution is
$\bar{F}(t)\sim\frac{1}{t\pi}$ and therefore $U\in R_{-1}$.

An important property for regularly varying tailed distributions is as follows:
Assume $U_1,\ldots,U_n$ are i.i.d. random variables with distribution function $F\in R_{-\gamma}$. Then
\begin{equation}\label{independence}
   P(U_1+\ldots+U_n>t)\sim nP(U_1>t).
\end{equation}

\subsection{Asymptotic tail probability approximation and robustness to dependence}\label{section 3.2}
The first theorem below investigates the approximation of the null distribution of the test statistic.
Assume that the $p$-values are obtained from z-scores; that is, all the test statistics follow normal distributions.
Specifically, let $\mathbf{X}=(X_1,\ldots,X_n)$ be the random vector (z-scores) for the $n$ test statistics. The mean of $\mathbf{X}$ is $\mathbf{\mu}=(\mu_1,\ldots,\mu_n)$ and correlation matrix $\mathbf{\Sigma}$.  Since we can always rescale test statistics, we assume each $X_i$ has variance 1. Under the null hypothesis, $H_0: \mu_i=0,\forall i=1,\ldots,n$, hence the $p$-value for the $i$th study is $p_i=2(1-\Phi(|X_i|))$ for $i=1,\dots,n$. Recall from the introduction section, we consider the test statistic $T(\mathbf{X})= \sum_{i=1}^n g(p_i)= \sum_{i=1}^n g(2(1-\Phi(|X_i|)))$, which is a sum
of transformed $p$-values. When $p_i\overset{D}{\sim} unif(0,1)$ under the null hypothesis, $g(p_i)$ is a random variable, where we denote $g(p_i)\overset{D}{\sim} U$, which is consistent with previously introduced relationship $g(p_i)=F^{-1}_U(1-p_i)$ when $U$ is a continuous random variable. We further assume the following conditions for $T(\mathbf{X})$:\\
\textbf{(A1)} $\forall 1\le i<j\le n$, $X_i$ and $X_j$ are bivariate normally distributed.\\
\textbf{(A2)} Let $U_i=g(p_i),i=1,\ldots,n.$ with $U_i\overset{D}{\sim} U \in R_{-\gamma}$ under $H_0$. Assume the function $g(p)$ is continuous and $g(p)$ satisfies one of the two situations: (A2.1) $g(p)$ is strictly decreasing in $(0,1)$; (A2.2)  $g(p)$ is bounded below (i.e., $g(p)>c’$ for certain constant $c’$) and is strictly decreasing on in $(0,c)$ with some constant $0<c<1$. \\
\textbf{(A3)} (\textit{balance condition}) Under $H_0$, let $F$ be the CDF of $U$ and $G(t)=P(|U|>t)=t^{-\gamma}L(t)$ where $L(t)$ is a slow-varying function. Assume $\frac{\bar{F}(t)}{G(t)}\rightarrow p$ and $\frac{F(-t)}{G(t)}\rightarrow q$ as $t\rightarrow\infty$,
where $0< p\le 1$ and $p+q=1$. \\

Condition (A1) is mild and is also assumed in \cite{liu2020cauchy} when investigating the robustness of the Cauchy method under arbitrary correlation structure. In fact, this condition is to guarantee the tail distributions of each pair of $U_i$ and $U_j$ are asymptotically independent; see the precise definition of asymptotically tailed independence for a pair of random variables in the Supplement.

Condition (A2) includes the Box-Cox transformation (satisfying A2.1), Cauchy transformation (satisfying A2.1) and truncated Cauchy transformation (satisfying A2.2)  introduced in Section \ref{section2.2}. Condition (A3) is called
"balance condition", which is a common condition for regularly varying tailed random variables \citep{goldie1998subexponential}. 
For example, for
the harmonic mean method, $p=1,q=0$; for the Cauchy method, $p=q=1/2$, and for the truncated Cauchy method, $p=1,q=0$.
\begin{theorem}\label{thm 1}
Under conditions (A1), (A2) and (A3) and assume
$\rho_{ij},1\le i<j\le n$, the $(i,j)$th element of $\mathbf{\Sigma}$, satisfies $-1<\rho_{ij}<1$.
Then under $H_0:\mathbf{\mu}=\mathbf{0}$ and for any correlation matrix $\mathbf{\Sigma}$,
We have
$$P(T(\mathbf{X})>t)\sim n P(U>t).$$
\end{theorem}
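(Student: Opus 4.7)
The plan is to extend the classical one-large-jump principle for sums of i.i.d.\ regularly varying random variables, equation~(\ref{independence}), to the dependent setting here, sandwiching $P(T(\mathbf{X}) > t)$ between matching upper and lower bounds of the form $n P(U > t)(1+o(1)) = n p G(t)(1+o(1))$. The key additional ingredient beyond the i.i.d.\ case is pairwise asymptotic tail independence of the $U_i$'s, supplied by (A1).

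The first step is to establish that for every $i \neq j$, as $u \to \infty$,
\[
P(U_i > u,\, U_j > u) = o\bigl(P(U_i > u)\bigr), \qquad P(U_i > u,\, U_j < -u) = o(G(u)).
\]
By (A1), $(X_i, X_j)$ is bivariate normal with $|\rho_{ij}| < 1$, so a Mills-ratio calculation gives $P(|X_i| > s, |X_j| > s)$ decaying at rate $\exp(-s^2/(1+|\rho_{ij}|))$, strictly faster than the marginal rate $\exp(-s^2/2)$. Since (A2) makes $g$ (eventually) strictly decreasing, $\{U_i > u\}$ translates to $\{|X_i| > s(u)\}$ with $s(u) \to \infty$ through a monotone reparametrization, and the Mills-ratio estimate transfers directly. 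For the mixed tail, $\{U_j < -u\}$ instead corresponds to $\{|X_j|$ small$\}$, whose probability is $O(G(u))$ by (A3); combining with $\{|X_i| > s(u)\}$ via the bivariate Gaussian conditional density again yields $o(G(u))$.

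Then I would sandwich $P(T(\mathbf{X}) > t)$. For the upper bound, split
\[
P(T > t) \leq P\bigl(\max_i U_i > (1-\epsilon) t\bigr) + P\bigl(T > t,\, \max_i U_i \leq (1-\epsilon) t\bigr);
\]
the first summand is bounded by $\sum_i P(U_i > (1-\epsilon)t) \sim n p (1-\epsilon)^{-\gamma} G(t)$ via $F \in R_{-\gamma}$ and (A3), while on the second, $T > t$ forces at least two $U_i$'s to exceed $\epsilon t/(n-1)$, and the union bound over pairs combined with pairwise tail independence gives $o(G(t))$. For the lower bound, apply Bonferroni to $A_i = \{U_i > (1+\epsilon)t\}$: $P(\bigcup_i A_i) \geq \sum_i P(A_i) - \sum_{i<j} P(A_i \cap A_j) \sim n p (1+\epsilon)^{-\gamma} G(t)$; then intersect with $B_i = \{\sum_{j \neq i} U_j > -\epsilon t\}$ so that $A_i \cap B_i \subseteq \{T > t\}$, and control the spoiling set $\bigcup_i (A_i \cap B_i^c)$ by $\sum_{i \neq j} P\bigl(U_i > (1+\epsilon)t,\, U_j < -\epsilon t/(n-1)\bigr) = o(G(t))$ using the mixed-tail independence above. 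Letting $\epsilon \downarrow 0$ in both directions closes the sandwich.

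The main obstacle is the lower bound, specifically ruling out that heavy negative tails of the other $U_j$'s can cancel a single large positive $U_i$ and push $T$ below $t$. Under (A2.2), e.g.\ truncated Cauchy, $U_j$ is bounded below and this issue disappears. Under (A2.1) with balance parameter $q > 0$, e.g.\ the Cauchy case with $q = 1/2$, the negative tail is as heavy as the positive, and one must carefully establish the mixed-tail independence estimate; this reduces to a bivariate Gaussian tail bound that mixes a large-$|X|$ event with a small-$|X|$ event, jointly suppressed because $|\rho_{ij}| < 1$, which is precisely where (A1), (A2), and (A3) are used in concert.
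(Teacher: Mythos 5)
Your proposal is correct, and its first half coincides with the paper's: both arguments reduce the theorem to pairwise asymptotic tail independence of the $U_i$'s, obtained by translating $\{U_i>u\}$ into $\{|X_i|>s(u)\}$ through the monotonicity in (A2) and then invoking bivariate Gaussian tail bounds for $|\rho_{ij}|<1$ (the paper uses Willink's inequality $P(X_1>t,X_2>t)\le\Phi(-t)\Phi(-\theta t)(1+\rho)$ plus the decomposition $X_i=C_1Z+C_2X_j$ for the mixed tail $P(U_i>t,\,U_j<-t)$; your Mills-ratio estimate is the same computation). Where you diverge is the second half: the paper at that point simply cites Theorem 3.1 of Chen and Yuen (2009) (its Lemma~\ref{lemma 1.2}), which states that for pairwise asymptotically tail independent regularly varying summands $P(U_1+\cdots+U_n>t)\sim\sum_i\bar F_i(t)$, whereas you re-derive this sum asymptotic from scratch via the one-large-jump sandwich --- splitting on $\max_i U_i$ for the upper bound and using Bonferroni on $A_i=\{U_i>(1+\epsilon)t\}$ intersected with $B_i=\{\sum_{j\ne i}U_j>-\epsilon t\}$ for the lower bound, then letting $\epsilon\downarrow 0$. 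Your sandwich is sound (on $\{\max_i U_i\le(1-\epsilon)t\}\cap\{T>t\}$ at least two summands must exceed $\epsilon t/(n-1)$, and on $A_i\cap B_i^c$ some $U_j$ must drop below $-\epsilon t/(n-1)$, so both error terms are controlled by the pairwise estimates together with regular variation of $\bar F$ and $G$ at the rescaled thresholds). The trade-off is transparency versus economy: your version makes the result self-contained and exposes exactly where the negative tails and the balance condition (A3) enter the lower bound, while the paper's citation keeps the proof short and defers those details to the external lemma. Both correctly exploit that $n$ is fixed, so the rescalings by $\epsilon$ and $n-1$ are harmless.
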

Here $T(\mathbf{X})=\sum_{i=1}^n U_i$ is the sum of correlated regularly varying tailed random variables. 
The theorem is somewhat surprising and a general result since it is applicable to any regularly varying method and any correlation structure $\mathbf{\Sigma}$ with $-1<\rho_{ij}<1$ as long as no perfect correlation exists.
This theorem is essentially based on Theorem 3.1 in \cite{chen2009sums}, i.e. Lemma S2 in the Supplement.  
Roughly speaking, because of the heaviness of the tail for each $U_i$ and the asymptotic tailed independence between each pair of $U_i$ and $U_j$, asymptotically the correlation structure has very limited influence on the tail of T(X). 
Since the approximated tail probability is independent of $\mathbf{\Sigma}$, an immediate application is to derive the $p$-value of a regularly varying method under independence assumption (i.e. $P(U_1+\cdots +U_n>t)$ with i.i.d. $U_1,\cdots ,U_n$; see Equation \eqref{independence}). The theorem warrants its asymptotic robustness to arbitrary dependence structure as similarly shown in the harmonic mean and Cauchy methods \citep{wilson2019harmonic,liu2020cauchy}. Or alternatively, one may approximate the tail probability by $nP(U>t)$. We, however, note that the robustness to arbitrary dependence structure is in an asymptotic sense, meaning extremely large $t$ (corresponding to extremely small test size $\alpha$) may be required for different tail heaviness in $U$ and correlation structure to guarantee a good approximation.

\begin{figure}
\hspace*{-2cm}\includegraphics[scale=0.40]{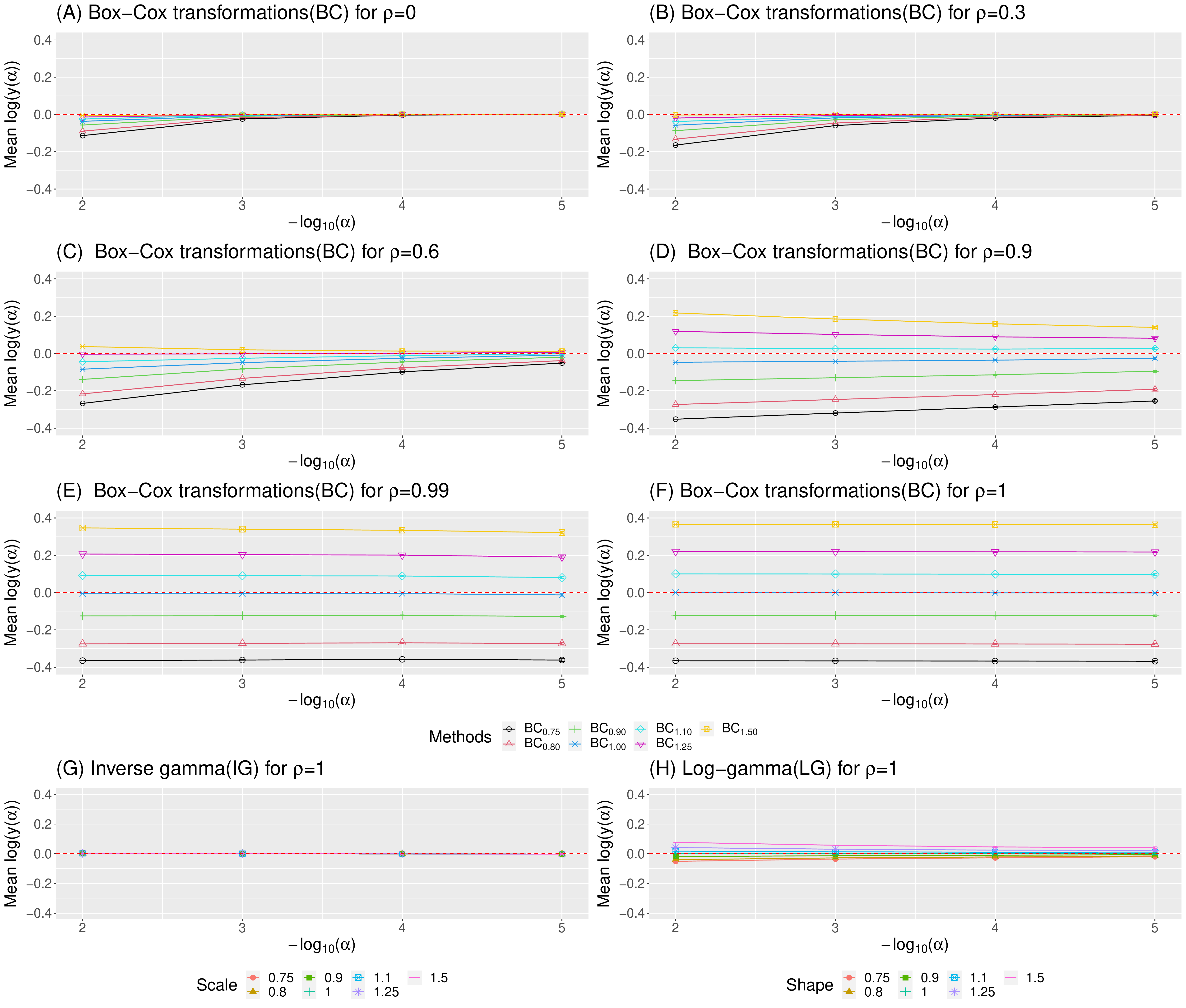}
\caption{The mean log-scaled $y(\alpha)$ for Box-Cox transformations, inverse gamma and log-gamma across different significance levels $\alpha$.
(A)-(F) represent the results of Box-Cox transformations with different values of $\eta=$ 0.75, 0.8, 0.9, 1, 1.1, 1.25, 1.5 for correlation level $\rho=$0, 0.3, 0.6, 0.9, 0.99 and 1 respectively.
(G) represents the results of inverse gamma with shape parameter equals 1 and the values of scale parameter across 0.75, 0.8, 0.9, 1, 1.1, 1.25, 1.5 for correlation level $\rho=1$.
(H) represents the results of log-gamma with rate parameter equals 1 and the values of scale parameter across 0.75, 0.8, 0.9, 1, 1.1, 1.25, 1.5 for correlation level $\rho=1$.
The x-axis is the negative logarithm of significance level $\alpha$ to base 10 where
$\alpha$ is set to be $10^{-2},10^{-3},10^{-4},10^{-5}$ and the red dash line is the reference line $\log(y(\alpha))=0$ for all the sub-figures.
}
\label{ratio_corr}
\end{figure}

Below we perform a simple simulation to demonstrate and investigate Theorem \ref{thm 1}. Assume $n=3$ and $\mathbf{X}=(X_1,X_2,X_3)$ is multivariate normal with unit variance and common pairwise correlation $\rho_{ij}=\rho$ ($1\le i<j\le 3$). In this simulation we set $\rho=$ 0, 0.3, 0.6, 0.9 and 0.99.
Here we consider 7 Box-Cox tests, $BC_{0.75}$, $BC_{0.8}$, $BC_{0.9}$, $BC_{1}$, $BC_{1.1}$, $BC_{1.25}$, and $BC_{1.5}$. From Theorem \ref{thm 1}, we calculate $y(\alpha)=\frac{nP(U>t_{\alpha})}{P(T(\mathbf{X})>t_{\alpha})}$ from simulations, where $t_{\alpha}$ is chosen so that $P(T(\mathbf{X})>t_{\alpha})=\alpha$ and $\alpha=10^{-2},10^{-3},10^{-4},10^{-5}$.
We expect $\lim_{t_{\alpha}\rightarrow \infty} \log\left(y(\alpha)\right)=0$ when $-1<\rho<1$. Figure \ref{ratio_corr}A-\ref{ratio_corr}E  show $\log_{10}$-scale $\alpha$ on the x-axis and the mean $\log\left(y(\alpha)\right)$ on the y-axis for different $\rho=$(0, 0.3, 0.6, 0.9, 0.99). We note that, as $\rho$ increases, smaller $\alpha$ will be required to observe a good approximation. Theorem \ref{thm2} below further characterizes what would happen if partial of the p-values have perfect correlations $\rho_{ij}=1$ or $-1$.

\begin{theorem}\label{thm2}
Suppose the conditions (A1), (A2) and (A3) in Theorem \ref{thm 1} hold. Define an arbitrary weight vector $w=(w_1,\cdots,w_n)\in R^n_+$, $T_{n,w}=\sum_{i=1}^n w_i g(p_i)$. Also assume
$\rho_{ij}=1 \text{ or }-1$ for $ 1\le i< j\le m$, and $|\rho_{ij}|<1$ for $i>m$ or $j>m$. We have:
\[P(T_{n,w}\left(\mathbf{X})>t\right)\sim\left\{\left(\sum_{i=1}^m w_i\right)^{\gamma}+\sum_{i=m+1}^n w_i^{\gamma}\right\}P(U>t).\]
\end{theorem}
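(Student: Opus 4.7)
\textbf{Proof proposal for Theorem \ref{thm2}.}

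The plan is to reduce Theorem \ref{thm2} to the tail asymptotic already established in Theorem \ref{thm 1}, by collapsing the perfectly correlated block into a single regularly varying summand. The starting observation is that $\rho_{ij}=1$ forces $X_i=X_j$ almost surely and $\rho_{ij}=-1$ forces $X_i=-X_j$; in either case $|X_i|=|X_j|$, so $p_i=2(1-\Phi(|X_i|))=p_j$ and hence $g(p_i)=g(p_j)$ for all $1\le i,j\le m$. Setting $c_0:=\sum_{i=1}^m w_i$ and $U_i:=g(p_i)$, we obtain
\begin{align*}
T_{n,w}(\mathbf{X}) = c_0\, U_1 + \sum_{i=m+1}^n w_i\, U_i,
\end{align*}
a weighted sum of only $n-m+1$ terms.

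Next, I would verify that the reduced collection $\{c_0 U_1,\, w_{m+1}U_{m+1},\ldots,w_n U_n\}$ satisfies the hypotheses that drove Theorem \ref{thm 1}. For any $a>0$, slow variation of $L$ gives $P(aU>t)\sim a^{\gamma}L(t)t^{-\gamma}\sim a^{\gamma}P(U>t)$, and the balance condition (A3) is manifestly preserved under multiplication by a positive constant. The corresponding underlying Gaussians are $X_1,X_{m+1},\ldots,X_n$, which are pairwise bivariate normal with every pairwise correlation strictly inside $(-1,1)$ by hypothesis. Asymptotic tail-independence is a copula-level property that is preserved under the monotone marginal transformation $X\mapsto g(2(1-\Phi(|X|)))$ and under positive rescaling, so each pair $(c_0U_1,\,w_jU_j)$ and $(w_iU_i,\,w_jU_j)$ for $i,j>m$ remains asymptotically tail-independent.

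With these in place, I would invoke the same engine used for Theorem \ref{thm 1}, namely Theorem 3.1 of \cite{chen2009sums} (Lemma S2 in the Supplement): for a finite family of regularly varying, asymptotically tail-independent summands of common index $-\gamma$ under the balance condition, the tail of the sum equals the sum of the marginal tails asymptotically. Applied to the collapsed collection, this yields
\begin{align*}
P\bigl(T_{n,w}(\mathbf{X})>t\bigr) \sim P(c_0U_1>t) + \sum_{i=m+1}^n P(w_iU_i>t) \sim \left\{c_0^{\gamma} + \sum_{i=m+1}^n w_i^{\gamma}\right\}P(U>t),
\end{align*}
which is exactly the stated asymptotic.

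The main obstacle I anticipate is the second step: one must make sure that collapsing the first $m$ coordinates into the single representative $X_1$ does not introduce any subtle dependence with $X_j$ ($j>m$) that could violate asymptotic tail-independence. This is benign because each $X_i$ with $i\le m$ is deterministically $\pm X_1$, so the joint law of $(X_1,X_j)$ is the same bivariate normal with $|\rho_{1j}|<1$ already analyzed in Theorem \ref{thm 1}; asymptotic tail-independence then follows by exactly the same argument. Once this is settled, everything else reduces to the scaling rule $P(aU>t)\sim a^{\gamma}P(U>t)$ and routine bookkeeping.
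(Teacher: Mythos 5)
Your proposal is correct and follows essentially the same route as the paper's own proof: collapse the perfectly correlated block into a single summand with weight $\sum_{i=1}^m w_i$, check pairwise asymptotic tail-independence of the reduced collection, invoke Chen's result (Lemma S2), and finish with the scaling rule $P(aU>t)\sim a^{\gamma}P(U>t)$. The only step you state rather than prove is that tail-independence survives rescaling by \emph{unequal} positive weights (the defining limit uses a common threshold $t$ for both variables); the paper closes this by bounding with the larger weight and using regular variation in the denominator, which is exactly the short computation your "routine bookkeeping" would need to contain.
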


Consider a special case $w=(1,\cdots,1)$. An immediate consequence of Theorem \ref{thm2} is that only when $\gamma=1$ (e.g., $HM$ or $CA$ or $CA^{tr}$ method) can satisfy $\{(\sum_{i=1}^m w_i)^{\gamma}+\sum_{i=m+1}^n w_i^{\gamma}\}=m^\gamma+(n-m)=n$, which produces the asymptotic robustness of Theorem \ref{thm 1}. In other words, Figure \ref{ratio_corr}A-\ref{ratio_corr}E already shows a hint that the convergence of Theorem \ref{thm 1} becomes more and more difficult when $\rho$ increases to almost 1. When some of the p-values have perfect correlation, only index $\gamma=1$ of the regularly varying distribution can still enjoy the asymptotic robustness to arbitrary dependence structure. Figure \ref{ratio_corr}F shows an simulation with $\rho=1$, which satisfies the condition of Theorem \ref{thm2}. By assuming $w_1=w_2=w_3=1$ and $\rho=1$, we have $P(T_{n,w}\left(\mathbf{X})>t\right)\sim  3^{\gamma}P(U>t)$. Figure \ref{ratio_corr}F verifies Theorem \ref{thm2} that only $BC_1$ can reach the convergence $\lim_{t_{\alpha}\rightarrow \infty} \log\left(y(\alpha)\right)=0$, showing robustness to perfect correlation. Although Figure \ref{ratio_corr}E ($\rho=0.99$) and Figure \ref{ratio_corr}F ($\rho=1$) visually look almost identical, all $BC$ methods in Figure \ref{ratio_corr}E will eventually converge to 0 as $\alpha\rightarrow 0$ by Theorem \ref{thm 1}, although very slowly. On the other hand, in Figure \ref{ratio_corr}F, only $BC_1$ can converge to 0 by Theorem \ref{thm2}.

\begin{corollary}\label{cor1}
Suppose the conditions in Theorem \ref{thm2} hold and assume $\sum_{i=1}^n w_i=n$, then we have:
\begin{align}
\left\{\begin{array}{lr}
P(T_{n,w}(\mathbf{X})>t)&\sim n P(U>t) \quad \text{if $\gamma=1$,}\\
P(T_{n,w}(\mathbf{X})>t)&\gtrsim  nP(U>t)\quad \text{if $\gamma>1$,}\\
P(T_{n,w}(\mathbf{X})>t)&\lesssim nP(U>t)\quad \text{if $\gamma<1$.}
\end{array}\right.\notag
\end{align}
\end{corollary}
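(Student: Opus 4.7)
The plan is to apply Theorem~\ref{thm2} directly and then reduce the corollary to an elementary inequality comparing the constant
\[
C \;=\; \Bigl(\sum_{i=1}^m w_i\Bigr)^{\gamma} + \sum_{i=m+1}^n w_i^{\gamma}
\]
to $n$, under the constraint $\sum_{i=1}^n w_i = n$. Theorem~\ref{thm2} gives $P(T_{n,w}(\mathbf{X})>t) \sim C\,P(U>t)$, so from here the problem is purely deterministic. The case $\gamma=1$ is immediate: $C = \sum_{i=1}^m w_i + \sum_{i=m+1}^n w_i = n$ by linearity.

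For $\gamma>1$ I would use a two-step chain of standard inequalities. First, $x\mapsto x^\gamma$ is convex on $[0,\infty)$ and vanishes at $0$, so it is super-additive: $(a+b)^\gamma \ge a^\gamma + b^\gamma$ for $a,b\ge 0$. Iterating gives $\bigl(\sum_{i=1}^m w_i\bigr)^\gamma \ge \sum_{i=1}^m w_i^\gamma$, hence $C \ge \sum_{i=1}^n w_i^\gamma$. Second, Jensen's inequality applied to the convex function $x^\gamma$ with uniform weights on $\{1,\dots,n\}$ yields $\frac{1}{n}\sum_{i=1}^n w_i^\gamma \ge \bigl(\frac{1}{n}\sum_{i=1}^n w_i\bigr)^\gamma = 1$, so $\sum w_i^\gamma \ge n$. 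Chaining gives $C \ge n$. The case $\gamma<1$ is completely symmetric: $x^\gamma$ is concave on $[0,\infty)$ and thus sub-additive, and Jensen's inequality reverses direction, so both inequalities flip and $C \le n$.

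The only subtle point is matching the paper's convention that $u\gtrsim v$ requires $\lim u/v > 1$ strictly. Strictness of the two-step chain fails only in the degenerate configuration where all $w_i$ equal $1$ and $m\le 1$, and that case already falls under Theorem~\ref{thm 1}. In every other scenario (for instance, any $m\ge 2$ with two or more positive weights in the perfectly correlated block, or any non-uniform weight vector) both inequalities are strict and the $\gtrsim$ (resp.\ $\lesssim$) conclusion holds as intended. I do not foresee any real obstacle: once Theorem~\ref{thm2} is invoked, the entire argument is a short deterministic calculation combining super-additivity of $x^\gamma$ on $[0,\infty)$ with Jensen's inequality.
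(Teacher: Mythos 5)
Your proposal is correct and follows the same route the paper intends: Corollary \ref{cor1} is stated as an immediate consequence of Theorem \ref{thm2}, reducing to the comparison of $\bigl(\sum_{i=1}^m w_i\bigr)^{\gamma}+\sum_{i=m+1}^n w_i^{\gamma}$ with $n$ via super/sub-additivity and Jensen's inequality, exactly as you do. Your remark on the degenerate case where strictness of the paper's $\gtrsim$ convention fails (all $w_i=1$ with $m\le 1$) is a careful touch the paper itself does not spell out.
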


From Corollary \ref{cor1}, note that, when $w_1=\cdots=w_n=1$ and the transformation $g(p)=1/p^{1/\gamma}$, the test statistic $T_{n,w}$ corresponds to
the statistic $BC_{\eta},\eta=1/\gamma$. Hence, the BC tests with $\eta<1$ (i.e., $\gamma>1$) are anti-conservative in this situation; the higher the value of $\gamma$ is, the more anti-conservative the test is.  This is verified by Figure \ref{ratio_corr}F for $BC_{0.9}$, $BC_{0.8}$ and $BC_{0.75}$ when $\rho=1$.
As $\eta\rightarrow 0$ (i.e., $\gamma\rightarrow\infty$), $BC_{\eta}$ is asymptotically equivalent to the Fisher's method and is the most anti-conservative under dependence.
On the other hand, for $\eta>1$ (i.e., $\gamma<1$), all the corresponding tests $BC_{\eta}$ ($\eta>1$) are conservative under this dependence structure, which is confirmed by Figure \ref{ratio_corr}F for $BC_{1.1}$, $BC_{1.25}$ and $BC_{1.5}$.
In particular, when $\eta\rightarrow\infty$ ($\gamma\rightarrow 0$), $BC_{\eta}$ becomes  $minP$, which hence is expected to be very conservative. Figure \ref{ratio_corr}G and \ref{ratio_corr}H verifies that since inverse gamma and log-gamma are also regularly varying distributions with index $\gamma=1$, they enjoy asymptotic robustness to correlation structure similar to $HM$ ($BC_1$) and Cauchy even when perfect correlation exists.

\begin{corollary}\label{cor2}
If we further assume $-1<\rho_{i,j}<1,\forall 1\le i<j\le n$ (i.e., $m=0$), then we have $$P(T_{n,w}>t)\sim \sum_{i=1}^n w_i^{\gamma}P(U>t).$$
\end{corollary}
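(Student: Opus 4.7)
The plan is to obtain Corollary \ref{cor2} as the specialization of Theorem \ref{thm2} to $m=0$. Under the hypothesis $-1<\rho_{ij}<1$ for all $1\le i<j\le n$, no pair among the $X_i$'s is perfectly correlated, so the ``perfectly correlated block'' indexed by $1\le i<j\le m$ in Theorem \ref{thm2} is empty, corresponding to $m=0$. Substituting $m=0$ into the conclusion of that theorem gives
\[
P(T_{n,w}>t)\sim\Bigl\{\Bigl(\sum_{i=1}^{0} w_i\Bigr)^{\gamma}+\sum_{i=1}^{n} w_i^{\gamma}\Bigr\}P(U>t)=\sum_{i=1}^{n} w_i^{\gamma}P(U>t),
\]
since the empty sum equals zero and $0^{\gamma}=0$ for $\gamma>0$. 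Thus the corollary is, at one level, just a bookkeeping consequence of the more general theorem.

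For a self-contained argument that sidesteps the perfect-correlation machinery, I would alternatively proceed directly. First, each weighted summand $w_i U_i=w_i g(p_i)$ remains regularly varying with index $\gamma$: by the defining property of $R_{-\gamma}$ and positivity of $w_i$,
\[
P(w_i U>t)=P(U>t/w_i)\sim w_i^{\gamma}\, P(U>t)\quad\text{as }t\to\infty.
\]
Second, the pairwise asymptotic tail independence of $(U_i,U_j)$ that the proofs of Theorem \ref{thm 1} and Theorem \ref{thm2} derive from (A1) together with $|\rho_{ij}|<1$ is inherited by $(w_i U_i, w_j U_j)$, because multiplying by positive constants only rescales the truncation thresholds. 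The summation result from \cite{chen2009sums} invoked in the proof of Theorem \ref{thm 1} then applies to $T_{n,w}=\sum_{i=1}^n w_i U_i$ and yields
\[
P(T_{n,w}>t)\sim\sum_{i=1}^{n}P(w_i U_i>t)\sim\sum_{i=1}^{n} w_i^{\gamma}\,P(U>t).
\]

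The only technical point requiring verification is that the scaling $U_i\mapsto w_i U_i$ preserves every hypothesis of the Chen-Yuen summation lemma: the regularly varying index is preserved by the display above, and pairwise tail independence is preserved because the events $\{w_i U_i>t\}$ and $\{U_i>t/w_i\}$ are identical. I do not anticipate any substantive obstacle, since the weights enter only through the scaling property of regular variation; the corollary follows by exactly the same mechanism that drives Theorems \ref{thm 1} and \ref{thm2}, merely with a $w_i^\gamma$ factor appearing in each contribution.
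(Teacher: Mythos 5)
Your proposal is correct and matches the paper's (implicit) derivation: the corollary is obtained by setting $m=0$ in Theorem \ref{thm2}, and your alternative self-contained argument simply replays the first half of the paper's proof of Theorem \ref{thm2}, where pairwise asymptotic tail independence of $w_iU_i$ and $w_jU_j$ is established and Lemma 2 (Chen--Yuen) is applied together with $P(w_iU_i>t)\sim w_i^{\gamma}P(U>t)$. No gaps.
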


Corollary \ref{cor2} shows that, among regularly varying methods, only methods with index $\gamma=1$ are robust to weights. Also note that this formula can be considered to be an extension of Corollary 1.3.8 in \citep{mikosch1999regular}, in which $U_1,\ldots,U_n$ are assumed to be independent regularly varying distributed random variables.

\noindent \underline{Remark}: 

Note that the robustness property of Theorem \ref{thm 1} and \ref{thm2} is similar to \citep{liu2020cauchy,wilson2019harmonic} and only describes the asymptotic behavior of the tail probability of our proposed family. Indeed, the results of Theorem \ref{thm 1} and \ref{thm2} only guarantee that the type I errors of the corresponding tests ($\gamma=1$, equivalent to harmonic mean and Cauchy) can be well controlled for a small size $\alpha$ given fixed $n$ and $\mathbf{\Sigma}$. Intuitively, as $n$ increases, a more stringent cutoff corresponding to a small $\alpha$ is needed to ensure the robustness of type I error control. An ideal robustness property should be to achieve a uniform upper tail bound in the sense of $P(T(X)>t_{\alpha})\leq c\cdot \alpha$ under any dependence structure $\mathbf{\Sigma}$, where $t_\alpha$ is the tail threshold when a nominal $\alpha$ is controlled under independence assumption, and $c$ is independent of $n$ and $\mathbf{\Sigma}$ and is in a reasonable magnitude (e.g., $c=1.5$, meaning the inflation of the type I error is at most 50\% in the worst scenario). This uniform bound is, however, not achievable in general. \cite{vovk2020combining} recently provided a remarkable uniform bound for arbitrary dependency structure but dependent on $n$ for the $HM$ method: 
$$
P(HM>t)\le n a^{HM}_n P\left(U>t\right)=\frac{na^{HM}_n}{t}, \text{where $U \overset{D}{\sim} $ $Pareto(1,1)$, }
$$
where the adjusted factor $\alpha_n^{HM}$ is between $\log(n)$ and $e\cdot \log(n)$ (see Proposition 6 in the paper). This bound is, however, not practical in general applications since, considering $n=100$ or 1000, the inflation bound $\alpha_n^{HM}\geq \log(n)$ is at least 4.6 or 6.9 folds. Furthermore, the factor $\alpha_n^{HM}$ is in comparison to type I error in perfect correlation situation (i.e., $\rho=1$), instead of nominal size $\alpha$ under independence. On this issue, \cite{goeman2019harmonic} pointed out an extreme case that when $n=10^5$ and $\mathbf{\Sigma}$ has exchangeable correlation $\rho=0.2$, $HM$ has more than three folds of type I  error inflation (true type I error=0.164 under nominal $\alpha=0.05$).
In Section \ref{section4.1}, we will perform extensive simulations for a wide range of $n$ and size $\alpha$ to investigate the limitation and develop a practical guidance for applying the $HM$ method in daily applications.

\subsection{Asymptotic power}\label{section3.3}
In this subsection, we investigate the asymptotic power and detection boundary of the test $T(\mathbf{X})$
under sparse alternatives as $n\rightarrow\infty$. 
Consider testing the null hypothesis $H_0: \mathbf{\mu}=(\mu_1,\cdots,\mu_n)=\Vec{0}$ for the bivariate normal $\mathbf{X}$. For the alternative, we consider the conventional "weak" and "sparse" signals setting in \cite{donoho2004higher} by assuming a small number of the $n$ signals are non-zero with $|\mu_i|=\sqrt{2\tau\log(n)}$ for $i\in S=\left\{1 \leq i \leq n: \mu_{i} \neq 0\right\}$ with $|S|=s$ and $0<\tau<1$, and the rest $\mu_i=0$ for $i\in S^c$. In addition, the sparsity of the signals is at the order of $s=n^{\beta}$ with $0<\beta<\frac{1}{2}$.

For Theorem \ref{thm3} below, in addition to the conditions (A2) and (A3), we need two additional conditions:\\
\noindent \textbf{Condition (C1)}: We assume $\mathbf{X}\overset{D}{\sim} N(\mathbf{\mu},\mathbf{\Sigma})$
and assume $\mathbf{\Sigma}$ is a banded correlation matrix; i.e.,
its $(i,j)th$ element $\rho_{ij}=0$ for any $|i-j|>d_0$ for some positive constant $d_0>0$.\\
\noindent \textbf{Condition (C2)}: There exist $h\ge 0$ and $t_1>0$ such that
$$\frac{1}{t^{\gamma}(\ln(t))^{h}}\le\bar{F}(t)\le \frac{(\ln(t))^{h}}{t^{\gamma}}$$ for all $t>t_1$.

Condition (C2) is for tail probability of $U_i$ and is a mild condition because $\bar{F}(t)=P(U_i>t)=\frac{L(t)}{t^{\gamma}}$ ($L(t)$ is a slowly varying function).
This condition holds for all the commonly used distributions we have mentioned so far with regularly varying tails with index $\gamma$. In the Supplement, we show that the $BC$, Cauchy and truncated Cauchy methods all satisfy Condition (C2).

\begin{theorem}\label{thm3}
Under conditions (A2), (A3), (C1) and (C2), for any $0<\gamma\le 1$, any significance level $0<\alpha<1$, and $\tau$ satisfying $\sqrt{\tau}+\sqrt{\beta}>1$, then
under the alternative hypothesis we have:
\[\lim_{n\rightarrow \infty}P(T(\mathbf{X})>t_{\alpha})=1,\]
where $t_\alpha$ is the $p$-value cutoff.
\end{theorem}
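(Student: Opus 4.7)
The plan is to show that under the alternative, with probability tending to one, at least one signal index $i^*\in S$ contributes a term $g(p_{i^*})$ that dominates both the critical threshold $t_\alpha$ and the remainder of the sum. I will proceed in three steps: (i) upper-bound $t_\alpha$ via Theorem \ref{thm 1} and condition (C2); (ii) lower-bound $\max_{i\in S}|X_i|$ by Gaussian extreme-value theory, using the banded structure (C1) to extract independent signals; and (iii) control the remaining terms of $T(\mathbf X)$ so that the signal contribution wins.

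For step (i), the banded structure in (C1) ensures $|\rho_{ij}|<1$ for all $i\ne j$, so Theorem \ref{thm 1} applies and $P(T>t_\alpha)\sim nP(U>t_\alpha)=\alpha$ under $H_0$. Combined with the upper-tail bound $\bar F(t)\le(\ln t)^h/t^\gamma$ from (C2), this forces $t_\alpha\le C_1(n\log^h n/\alpha)^{1/\gamma}$ for $n$ large. For step (ii), the banded structure also lets me extract $S'\subset S$ of size at least $n^\beta/(d_0+1)$ whose indices are pairwise separated by more than $d_0$, so $\{X_i\}_{i\in S'}$ are mutually independent Gaussians. Writing $X_i=\mu_i+Z_i$ with $Z_i$ i.i.d.\ $N(0,1)$ (WLOG $\mu_i>0$), standard Gaussian extreme-value theory gives $\max_{i\in S'}Z_i\ge\sqrt{2(\beta-\epsilon')\log n}$ w.p.\ $\to 1$ for any $\epsilon'>0$, hence $|X_{i^*}|\ge(\sqrt\tau+\sqrt{\beta-\epsilon'})\sqrt{2\log n}$ at the maximizer $i^*$. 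Mills' ratio gives $p_{i^*}\le C_2\, n^{-(\sqrt\tau+\sqrt{\beta-\epsilon'})^2}/\sqrt{\log n}$, and inverting through $\bar F(g(p_{i^*}))=p_{i^*}$ together with the lower-tail bound in (C2) yields
\[g(p_{i^*})\ge C_3\, n^{(\sqrt\tau+\sqrt{\beta-\epsilon'})^2/\gamma}/(\log n)^{h/\gamma}.\]
Since $\sqrt\tau+\sqrt\beta>1$, choosing $\epsilon'$ small makes $(\sqrt\tau+\sqrt{\beta-\epsilon'})^2=1+2\epsilon''$ for some $\epsilon''>0$, so $M_n:=g(p_{i^*})/t_\alpha\to\infty$.

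For step (iii), decompose $T(\mathbf X)=g(p_{i^*})+T_{\mathrm{null}}+R_S$ with $T_{\mathrm{null}}=\sum_{i\in S^c}g(p_i)$ and $R_S=\sum_{i\in S\setminus\{i^*\}}g(p_i)$. Applying Theorem \ref{thm 1} to the $(n-|S|)$-term null sum (and its mirror for the lower tail, obtained by running the same Chen-style argument on $-U_i$ using the balance condition (A3)) gives $P(|T_{\mathrm{null}}|>C t_\alpha)\sim (\alpha/p)C^{-\gamma}$ for every $C>0$, so for any $\eta>0$ a suitable choice of $C$ makes this $<\eta$ at large $n$. For $R_S$, stochastic dominance of $p_i$ under the alternative by its uniform null distribution transfers the negative-tail bound: $P(|R_S|>t_\alpha)\lesssim s\,G(t_\alpha)=(s/n)(\alpha/p)=o(1)$ since $s/n\to 0$. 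Combining, with probability $\ge 1-2\eta$ and $n$ large,
\[T(\mathbf X)\ge g(p_{i^*})-|T_{\mathrm{null}}|-|R_S|\ge(M_n-C-1)t_\alpha>t_\alpha,\]
and letting $\eta\to 0$ concludes $P(T(\mathbf X)>t_\alpha)\to 1$.

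The hard part will be step (iii): since $\gamma\le 1$ means $U$ has no finite mean, $T_{\mathrm{null}}$ cannot be handled by LLN/CLT-type arguments, and one must instead rely on heavy-tailed sum tail asymptotics for \emph{both} tails, with the balance condition (A3) essential for the lower tail. A secondary issue is uniformity---Theorem \ref{thm 1} is stated for fixed $n$, whereas Theorem \ref{thm3} requires the asymptotic $P(T>t_\alpha)\sim nP(U>t_\alpha)$ to hold jointly as $n\to\infty$ and $t_\alpha\to\infty$; this is covered because the Chen-type results underlying Theorem \ref{thm 1} are uniform in $n$ under regular variation.
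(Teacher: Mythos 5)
Your overall architecture matches the paper's: bound $t_\alpha$ by $n^{1/\gamma}$ times logarithmic factors, show the largest signal term grows like $n^{(\sqrt\tau+\sqrt{\beta})^2/\gamma-o(1)}$ (your step (ii), via extracting an independent subset of $S$ and Mills' ratio, is essentially the paper's argument, which instead cites a lemma of Cai et al.\ for the maximum of banded-correlated Gaussians), and show the rest of the sum is $O_p(t_\alpha)$. The genuine gap is exactly where you suspected it, and your proposed fix does not work. Theorem \ref{thm 1} is a fixed-$n$, $t\rightarrow\infty$ statement, while for fixed $\alpha\in(0,1)$ the threshold $t_\alpha$ sits on the scale $a_n\asymp n^{1/\gamma}L_n$ of the stable limit; in that regime $P(T>a_ny)$ converges to the tail of a $\gamma$-stable law at $y$, which agrees with $nP(U>a_ny)\rightarrow y^{-\gamma}$ only as $y\rightarrow\infty$. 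So the asserted equivalence $P(T>t_\alpha)\sim nP(U>t_\alpha)=\alpha$ and the formula $P(|T_{\mathrm{null}}|>Ct_\alpha)\sim(\alpha/p)C^{-\gamma}$ are not consequences of Theorem \ref{thm 1}, and the claim that "the Chen-type results are uniform in $n$" is unjustified precisely on this scale. The paper closes this gap differently: it splits the banded sequence into $d_0+1$ groups of i.i.d.\ variables, applies the stable limit theorem of Davis (1983) for stationary regularly varying sequences to each group, and computes the norming constants $a_{n}=O(n^{1/\gamma}L_n)$ and, for $\gamma=1$, the centering $b_{n}=O(nL_n)$ (Lemma A4) to conclude that both the null statistic and $\sum_{i\in S^c}g(p_i)$ are $O_p(n^{1/\gamma}L_n)$. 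Your formula for $T_{\mathrm{null}}$ also silently drops the centering term, which for $\gamma=1$ (e.g.\ the harmonic mean, where $b_n=n\ln n$ dominates $a_n$) is not negligible.

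Two smaller points. First, for $R_S$ your stochastic-dominance observation is sound in direction (under the alternative $g(p_i)$ is stochastically larger, so only its negative tail needs control), but bounding $P(|R_S|>t_\alpha)$ by $sG(t_\alpha)$ again invokes a sum-tail asymptotic with $s\rightarrow\infty$; the paper avoids this entirely by the elementary bound $\sum_{i\in S}g(p_i)\ge g(2(1-\Phi(\max_{i\in S}|X_i|)))+(s-1)g(2(1-\Phi(\min_{i\in S}|X_i|)))$ and then showing $\min_{i\in S}|X_i|$ exceeds $n^{\beta_0-1}$ with probability tending to one, so the negative contribution is $o_p(n^{1/\gamma})$ deterministically in the number of terms. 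Second, your inversion $g(p_{i^*})\ge C_3n^{(\cdot)^2/\gamma}/(\log n)^{h/\gamma}$ needs the lower-tail half of (C2) together with monotonicity of $g$ near $0$ (condition (A2)); the paper isolates this as a separate lemma proved by contradiction. None of these affect the skeleton of your argument, but the stable-limit step is indispensable and must be supplied rather than asserted.
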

Theorem \ref{thm3} states that the power of this test $T(\mathbf{X})$ converges to 1 for any
significance level $\alpha>0$ and $0<\gamma\le 1$, or equivalently, that the sum of Type I and II errors goes to zero given the set-up. Indeed, Theorem \ref{thm3} implies that the
methods with $0<\gamma\le 1$ attain the optimal detection boundary defined in \cite{donoho2004higher} in the strong sparsity situation $0<\beta<1/4$.
\cite{liu2020cauchy} showed a similar result for their proposed Cauchy's test. As described in Section \ref{s:connection}, the Cauchy distribution has
regular-varying tail with index $\gamma=1$. This theorem is valid for methods of regularly varying tailed distributions with index
$0<\gamma\le 1$.
Therefore, this theorem can be considered to be a generalization of Theorem \ref{thm3} in \cite{liu2020cauchy}.

\section{Simulations}\label{section4}
In this section, we perform simulations to compare the robustness performance of different $p$-value combination methods under varying correlation levels among $p$-values to verify theoretical results in Section \ref{s:connection} and \ref{section3}. We include 7 methods discussed in Section \ref{s:connection}, $minP$, $BC_{1.25}$, $CA$, $CA^{tr}$, $HM(BC_1)$, $BC_{0.75}$ and the Fisher's method, as well as $HC$ (Higher criticism) and $BJ$ (Berk-Jones test). Section \ref{section4.1} firstly evaluates the type I error control of different methods under independence and varying level of correlation to verify the robustness of $HM$ and Cauchy methods. Further, since the robustness in Theorem \ref{thm2} for $HM$ and Cauchy is an asymptotic result, we further investigate the type I error control for $HM$ under a wide range of $n$, $\rho$ and $\gamma$ to ensure that the robustness of $HM$ and Cauchy is preserved and useful in a practical sense. Section \ref{section4.2} assesses the statistical power under different dependency structures and sparsity of signals in the alternative hypothesis. In Section \ref{section4.3}, we will evaluate the improvement of the truncated Cauchy method over the Cauchy method in a discrete data simulation.

\subsection{Type I error control}\label{section4.1}
In this subsection, we first simulate $n=100$, $X=(X_1,\cdots,X_n)\overset{D}{\sim} N(0, \mathbf{\Sigma})$, $p_i=2(1-\Phi(|X_i|))$ and $T=\sum_{i=1}^n g(p_i)$ for different aforementioned methods. We also assume that $\mathbf{\Sigma}$ has unit variance on the diagonal line and is exchangeable with common correlation $\rho=cor(X_i, X_j)$ for $1\leq i \neq j\leq n$, where $\rho$ is evaluated at 0 (independence), $0.3$, $0.6$ , $0.9$ and $0.99$. Table S2 shows the type I error of the 9 methods with different levels of correlations at $\alpha=0.001$ using $10^6$ simulations under the null hypothesis. As expected all methods control type I error perfectly under independence assumption (i.e., $\rho=0$). When correlation among p-values exists, we find that $minP$ is the most conservative in type I error control followed by $BC_{1.25}$, as expected from the theoretical result in Corollary \ref{cor1}. $CA$, $CA^{tr}$ and $HM$ remain with perfect type I error control in all correlation settings, showing robustness to dependency structure. Fisher and $BJ$ are the most anti-conservative methods in the presence of correlation, followed by slight anti-conservativeness for $HC$ and $BC_{0.75}$.

It is worth noting that according to Theorem \ref{thm 1} and \ref{thm2} for regularly varying distribution transformation, the tail probability $P(T(X)>t)$ under dependence can be asymptotically approximated by that under independence. However, the asymptotic result only guarantees the dependence robustness for very large $t$ (or equivalently very small $\alpha$). We also expect that larger $n$ will require larger $t$ (smaller $\alpha$) to ensure a good approximation. Specifically, \cite{goeman2019harmonic} has pointed out that, with $\rho=0.2$ and $n=10^5$, the much inflated type I error of 0.164 is obtained for size $\alpha=0.05$. Therefore, it is of interest to explore the robustness property of $T(X)$ for dependence in $HM$ for varying $n$, $\alpha$ and $\rho$ to provide a practical guidance in real applications. In Table S3, we extended the simulation for $HM$ with $n=(25, 50, 100, 500, 1000, 2000, 10000)$, $\alpha=(0.05, 0.01, 0.001, 0.0001)$, and $\rho=(0, 0.3, 0.6, 0.9, 0.99)$. Given the combination of $\alpha$ and $n$, we calculated the maximum percent of inflation (PI) across different $\rho$, which is defined as $\text{PI}=(\max_{\rho} \text{type I error} -\alpha)/(\alpha)$. The result confirms the theoretical result that larger $n$ will generate greater type I error inflation under dependence for a fixed $\alpha$ and will require much smaller $\alpha$ to improve the type I error inflation. For example, when $\alpha=0.01$, we have $PI=30\%$ for $n=25$ compared to $PI=80\%$ for $n=10,000$. On the other hand, when $n=10,000$, $PI$ decreases from $80\%$ to $49\%$ when $\alpha$ decreases from 0.01 to 0.0001. In general, the result shows robust type I error control under varying correlation levels in a practical sense when $n\leq 1,000$ and $\alpha\leq 0.05$ with the maximum $PI=50\%$, which inflates type I error from $\alpha=0.01$ to $0.015$ at $n=1000$ and $\rho=0.3$. Even when $n$ increases to 10,000, $PI$ only minimally increases to $80\%$. When multiple comparison is needed such as in the GWAS applications, small $\alpha$ is targeted and the robust type I error control for $HM$ is generally achieved in a practical sense. However, if a single test is performed with a very large $n$, caution should be taken for the type I error inflation (e.g., type I error is 0.072 for $\alpha=0.05$ when $n=10,000$ and $\rho=0.3$).

\subsection{Statistical power}\label{section4.2}
In this subsection, we follow the simulation setting in Section \ref{section4.1} to evaluate statistical power using different methods under different correlation $\rho$ and strengths of the signal. Following the sparse and weak signal setting in \cite{donoho2004higher}, we design the $n$ signals $\mu=(\mu_1,\cdots,\mu_n)$ to contain $n-s$ with no signal ($\mu_{s+1}=\cdots=\mu_n=0$) and the first $s$ have non-zero signals $\mu_1=\cdots=\mu_s=\mu_0=\frac{\sqrt{4\log(n)}}{s^{0.1}}$, where $s/n =(5\%, 10\%, 20\%)$. Section 4.2.1 will compare the power of different methods under varying correlation $\rho$, where the rejection threshold is obtained from the independence assumption and uncorrected for dependence. In Section 4.2.2, we further demonstrate the power comparison of different methods, where the rejection threshold is corrected with precise type I error control under dependency. We note that the correction is only applicable in simulations and are generally not accessible unless extensive permutation test or simulation-based methods are applied.

\subsubsection{4.2.1 \underline{Power comparison with uncorrected rejection threshold from independence assumption}}\quad\label{section4.2.1}

In Section \ref{section4.1}, $BJ$, $HC$, $BC_{0.75}$ and Fisher's method are anti-conservative when rejection threshold from independence assumption is used. In other words, the methods lose control of type I error when the dependence structure exists. As a result, we will only compare $HM$, $CA$, $CA^{tr}$, $BC_{1.25}$ and $minP$ in this subsection to evaluate the power of different methods in varying level of correlation $\rho$. Table S4 shows the power of the five methods. As expected, the statistical power decreases as $\rho$ increases. $HM$, $CA$ and $CA^{tr}$ methods have almost identical power and are superior to $BC_{1.25}$. $minP$ is the least powerful method among the five. Different proportions of signals give similar patterns and conclusions.

\subsubsection{4.2.2 \underline{Power comparison with corrected rejection threshold considering dependence structure}
}\quad\label{section4.2.2}

Since methods except for $CA$, $CA^{tr}$ and $HM$ are either conservative or anti-conservative in type I error control under the presence of correlation, the power comparison in the previous subsection is not completely fair. Here, we evaluate power using the rejection threshold corresponding to the accurate type I error control in each method under each correlation setting. We note that this comparison is theoretically a fairer comparison with accurate type I error control but, on the other hand, is less practical in applications unless the dependency structure is known or computationally intensive approaches are applied to precisely control the type I error.

Table \ref{tab:Corrected} shows results of all $9$ methods. We order the methods by the index $\eta$ of Box-Cox transformation as introduced in Section \ref{s:connection}: $minP$, $BC_{1.25}$, $HM$, $CA$, $CA^{tr}$, $BC_{0.75}$, Fisher, and then add $HC$ and $BJ$ for comparison. We first observe almost identical results of $CA$, $CA^{tr}$ and $HM$, and decreasing power when $\rho$ increases, as expected. We next compare the five methods $minP$, $CA/CA^{tr}/HM$ and Fisher with varying proportion of signals and $\rho$. When $\rho=0$, Fisher is the least powerful when $s/n=5\%$ (power$=0.640$) but becomes more powerful than $CA/CA^{tr}/HM$ and $minP$ when $s/n=10\%$ and $20\%$, showing its superior performance in frequent signals. $CA/CA^{tr}/HM$ consistently have good power in between $minP$ and Fisher. When $\rho$ increases, Fisher quickly drops to almost zero power even with accurate type I error control. For each given $s/n$, $minP$ is slightly less powerful than $CA/CA^{tr}/HM$ at small $\rho$ but becomes much more powerful than $CA/CA^{tr}/HM$ when $\rho$ is large. This is reasonable since at a very high correlation (e.g., $\rho=0.99$), all signals can almost be viewed as coming from one source so taking the smallest $p$-value gives sufficiently complete information. For $BC_{0.75}$ and $BC_{1.25}$, we observe that the performance of $BC_{1.25}$ is generally intermediate in between $minP$ and $CA/CA^{tr}/HM$, and $BC_{0.75}$ is between $CA/CA^{tr}/HM$ and Fisher. We next compare $HC$ and $BJ$ to the other methods. Although these two methods lose control of type I error under dependency structure and are not the focus of this paper, we are curious about their power performance if correlation structure is correctly considered with Type I error control. As shown in Table \ref{tab:Corrected}, $BJ$ is surprisingly powerful for all three proportion of signals when $\rho=0$ (e.g., power$=0.91$ compared to power$=0.640-0.778$ for the other $7$ methods when $s/n=5\%$). But similar to the Fisher's method, $BJ$'s power quickly drops to almost $0$ with the existence of dependency. The power of $HC$ is generally similar to $CA/CA^{tr}/HM$ but becomes weaker than $CA/CA^{tr}/HM$ for larger $\rho$. Both $HC$ and $BJ$ lose much power when $\rho$ increases. One possible explanation is that both tests compare the ordered $p$-values $p_{(i)}$ with the reference value $i/n$, which is not the correct reference under null with dependence structure.

\subsection{Simulation for the large negative penalty issue in the Cauchy method}\label{section4.3}
As discussed in Section \ref{section2.2}, $p$-values close to $1$ lead to large negative penalties in the Cauchy method, which can cause significant power loss. Below, we design a Fisher's exact (hypergeometric) test for a $2\times2$ contingency table to illustrate the issue and evaluate the improvement of the truncated Cauchy method.

We firstly evaluate type I error similar to Section 4.1. We randomly generate $n=20$, $2\times 2$ contingency tables with fixed row and column margins being $200$. The table has only one degree of freedom, assuming it is the upper-left cell of each table undetermined. Under the null hypothesis, rows and columns are independent and we generate the value of the upper-left cell from $Hypergeometric(400, 200, 200)$. We then apply Fisher's exact test to the simulated data of each table and combine the $n=20$ $p$-values using $HM$ and $CA$ methods. We repeat the simulation for $10^5$ times, set significance level at $\alpha=0.05, 0.01, 0.005, 0.001, 0.0005$ and $0.0001$,  and calculate the proportions of rejections at each $\alpha$. As shown in Table \ref{tab:Discrete} (effect size $p_{11}=0$), the type I errors for $HM$ is slightly smaller than the desired significance level under the null hypothesis (e.g. $0.00077$ versus $0.001$) while those for $CA$ are much lower (e.g. $0.00016$ versus $0.001$). The main reason of the conservativeness in both tests is that the null distribution under the simulation setting is skewed towards 1, instead of $unif(0,1)$, in which case $CA$ is more sensitive since it penalizes more for $p$-values close to $1$. As shown in table \ref{tab:Discrete}, the type I error control of $CA^{tr}$ under $\delta=0.01$ is largely improved for all different $\alpha$; e.g., type I error is now $0.00077$, identical to $HM$, when $\alpha=0.001$.   

We next evaluate power for $HM$ and $CA$. Similar to Section \ref{section4.2}, we simulate $10^5$ Monte Carlo samples. All settings are identical to the last paragraph for type I error control except that we now generate $2\times2$ tables with row-column correlation. We first simulate $Y$ from $Hypergeometric(400, 200, 200)$ under independence assumption. We then simulate $Z\overset{D}{\sim} Bin(200-Y, p_{11})$ and take $Y+Z$ as the value for the upper-left cell. We note that $p_{11}=0$ corresponds to the original null hypothesis and the larger effect size $p_{11}$, the stronger signal. We set $p_{11}=0.2$ and $0.3$ and the powers under different $\alpha$ are shown in Table \ref{tab:Discrete}. As expected, larger $p_{11}$ generates higher power for both $HM$ and $CA$. $CA$ produces much smaller power than $HM$ mainly due to impact from skewed $p$-values toward $1$. $CA^{tr}$ largely alleviates the issue and can perform almost identical to $HM$.



\begingroup

\begin{table}
\caption {Mean corrected power for tests Fisher, $BC_{0.75}$, $CA$, $CA^{tr}$(truncated Cauchy), $HM$, $BC_{1.25}$, $minP$, $HC$ and $BJ$ across correlation $\rho=0,0.3,0.6,0.9, 0.99$ and proportion of signals $s/n=5\%,10\%,20\%$. The standard errors are far less than the mean power and hence omitted.} 

\begin{tabular}{llllllll}
\hline\hline
$s/n$ & Methods 
        &  $\rho=0$   & $\rho=0.3$ & $\rho=0.6$ & $\rho=0.9$ &     $\rho=0.99$ \\    \hline\hline
&$Fisher$ & 0.640 & 0.0039 & 0.0021 & 0.0017 & 0.0016 \\  
&$BC_{0.75}$ & 0.778 & 0.615 & 0.437 & 0.308 & 0.269 \\
&$CA$ & 0.749 & 0.620 & 0.490 & 0.387 & 0.348 \\
$5\%$& $CA^{tr}$  & 0.749 & 0.621 & 0.490 & 0.388 & 0.348\\
&$BC_1(HM)$ & 0.749 & 0.621 & 0.491 & 0.389 & 0.348 \\
&$BC_{1.25}$ & 0.735 & 0.618 & 0.509 & 0.438 & 0.402 \\
&$minP$ & 0.712 & 0.603 & 0.522 & 0.532 & 0.600 \\
\hline
&$HC$ & 0.760 & 0.623 & 0.415 & 0.216 & 0.195 \\
&$BJ$ & 0.912 & 0.0015 & 0.0001 & 0.001 & 0.001 \\
\hline\hline
&$Fisher$ & 0.992 & 0.013 & 0.0044 & 0.003 & 0.003 \\
&$BC_{0.75}$ & 0.908 & 0.689 & 0.461 & 0.301 & 0.258 \\
&$CA$ & 0.870 & 0.680 & 0.503 & 0.365 & 0.320 \\
$10\%$&$CA^{tr}$ & 0.870 & 0.681 & 0.503 & 0.366 & 0.319\\
&$BC_1(HM)$ & 0.869 & 0.681 & 0.504 & 0.366 & 0.319 \\
&$BC_{1.25}$ & 0.850 & 0.672 & 0.517 & 0.407 & 0.361 \\
&$minP$ & 0.814 & 0.646 & 0.520 & 0.480 & 0.514  \\
\hline
&$HC$ & 0.887 & 0.691 & 0.432 & 0.213 & 0.206 \\
&$BJ$ & 0.998 & 0.017 & 0.001 & 0.001 & 0.001 \\
\hline\hline
&$Fisher$ & 1.000 & 0.0745 & 0.017 & 0.009 & 0.008\\
&$BC_{0.75}$ & 0.982 & 0.752 & 0.484 & 0.300 & 0.255 \\
&$CA$ & 0.955 & 0.728 & 0.511 & 0.347 & 0.299 \\
$20\%$& $CA^{tr}$ & 0.955 & 0.729 & 0.512 & 0.348 & 0.299 \\
&$BC_1(HM)$ & 0.955 & 0.729 & 0.512 & 0.349 & 0.299 \\
&$BC_{1.25}$ & 0.936 & 0.713 & 0.518 & 0.378 & 0.329 \\
&$minP$ & 0.895 & 0.678 & 0.511 & 0.429 & 0.436 \\
\hline
&$HC$ & 0.973 & 0.749 & 0.451 & 0.227 & 0.231 \\
&$BJ$ & 1.000 & 0.202 & 0.016 & 0.008 & 0.013\\
 \hline\hline
\end{tabular}

\label{tab:Corrected}
\end{table}
\endgroup

\begingroup
\begin{table}
\caption { Mean proportion of rejection of $CA$, $HM$ and $CA^{tr}$(truncated $CA$) across $\rho_{11}=0(\text{type I error}), 0.2(power), 0.3(power)$. The standard errors are far less than the mean proportion and hence omitted.} 

\begin{tabular}{lllllllll}
\hline\hline
$\rho_{11}$ & Methods/Cutoff 
        &  0.05 & 0.01 & 0.005 & 0.001 & $5\times10^{-4}$ & $10^{-4}$                                                        
                                                   \\    \hline
&$CA$ & 0.00825 & 0.00182 & 0.000862 & 0.00016 & 0.0000687 & 1e-05 \\
$\rho_{11}=0$&$BC_1(HM)$ & 0.0386  & 0.00894 & 0.00417  & 0.00077 & 0.000334  & 0.0000487\\
&$CA^{tr}$ & 0.0285 & 0.00729 & 0.00417 & 0.00077 & 0.0000334 & 0.0000487 \\
\hline
&$CA$ & 0.333 & 0.202 & 0.146 & 0.0582 & 0.0408 & 0.0135\\
$\rho_{11}=0.2$&$BC_1(HM)$ & 0.863 & 0.525 & 0.379 & 0.154 & 0.108 & 0.0357\\
&$CA^{tr}$ & 0.848 & 0.522 & 0.377 & 0.154 & 0.108 & 0.0361\\
\hline
&$CA$ & 0.431 & 0.428 & 0.420 & 0.355 & 0.310 & 0.190\\
$\rho_{11}=0.3$&$BC_1(HM)$ & 1.000 & 0.992 & 0.972 & 0.822 & 0.717 & 0.440\\
&$CA^{tr}$ & 1.000 & 0.991 & 0.971 & 0.822 & 0.716 & 0.440\\
 \hline\hline
\end{tabular}

\label{tab:Discrete}
\end{table}
\endgroup
\section{Application}\label{section5}
We apply the $HM$, $CA$, $CA^{tr}$, and $minP$ tests to analyze a GWAS of neuroticism \citep{okbay2016genetic}, a personality trait characterized by easily experiencing negative emotions. The dataset contains $6,524,432$ genetic variants (SNPs) across $179,811$ individuals and $p$-values are calculated for all SNPs to represent the association between the variant and neuroticism. We use genome annotations to locate the genic or intergenic region for each variant. The total number of intergenic and genic regions is $78,895$. Within each genic or intergenic region, we combine $p$-values of variants in each region using the $HM$, $CA$, $CA^{tr}$ and $minP$ methods and obtain the combined $p$-values. Figure \ref{manhattan} shows three Manhattan plots for the combined $p$-values using the $HM$, $CA$ and $minP$ methods, respectively.
As shown in Figure \ref{manhattan}, the combined $p$-values using $CA$ and $HM$ are almost identical and they are slightly more significant than those obtained from $minP$. The bottom right plot in Figure \ref{manhattan} shows the numbers of significant genic or intergenic regions with significance thresholds determined by the Bonferroni procedure (controlling the family-wise error rate at $0.05$) and the FDR procedure (controlling the false discovery rate at $0.05$), or $p$-value threshold at $10^{-4}$, $10^{-5}$ or $10^{-6}$. In all different significance thresholds, the numbers of statistically significant genes for $HM$ and $CA$ are almost identical and they are generally larger than those from $minP$. Particularly, $HM$ and $CA$ both identify $750$ regions under FDR$=5\%$ while $minP$ only finds $476$ regions. 

We input the 750 regions identified by $HM/CA$ under FDR$=5\%$ to the Ingenuity Pathway Analysis package for pathway enrichment analysis. The top enriched pathways include NEUROD1 and NEUROG2, which are transcription factors with important functions in neurogenesis. The top diseases and causal networks identify "neurological disease", which is related to neuroticism. In contrast, by applying the pathway analysis to the top $456$ regions by $minP$, we do not find enriched pathways potentially related to neuroticism and the top causal network is MKNK1, which has not been found to play a role in neurological functions.

We next investigate two regions, SLC2A9 and PCSK6, with small combined $p$-values by $HM$ $p=9.534\times 10^{-4}$ for SLC2A9 and $p=1.527\times 10^{-3}$ for PCSK6; $q$-values $q$=0.0759 for SLC2A9 and $q$=0.0939 for PCSK6) but not by $CA$ ($p=0.9999$ and $0.9999$ and q-values both equal 1). The SLC29A9 gene has been found related to Alzheimer's disease and PCSK6 is related to structural asymmetry of the brain and handedness. We suspect the difference of $HM$ and $CA$ comes from $p$-values close to 1 as described in Section \ref{section4.3}. Figure S2 shows two jitter plots of $p$-values for SNPs in genes SLC2A9 (right) and PCSK6 (left). Both of these two genes contain multiple SNPs with very small $p$-values (e.g. 17 SNPs with $p<10^{-4}$ in SLC2A9 and 8 SNPs for PCSK6) so the gene regions could potentially be significant. But since both genes also contain many SNPs with $p$-values close to 1 (5 SNPs with $p>0.99$ for SLC2A9 and 9 SNPs for PCSK6), $CA$ is impacted and produces larger combined $p$-values than $HM$, a situation similar to that described in Section \ref{section4.3}. Since there are above $500$ $p$-values to combine for both genes, by applying $CA^{tr}$ at $\delta=0.99$ with approximation by GCLT (Proposition \ref{GCLT}), the $p$-values improve to $9.531\times 10^{-4}$ for SLC2A9 and $1.532\times 10^{-3}$ for PCSK6, which are almost identical to the $p$-values calculated by $HM$.

\begin{figure}
\hspace*{-2cm}\includegraphics[scale=0.6]{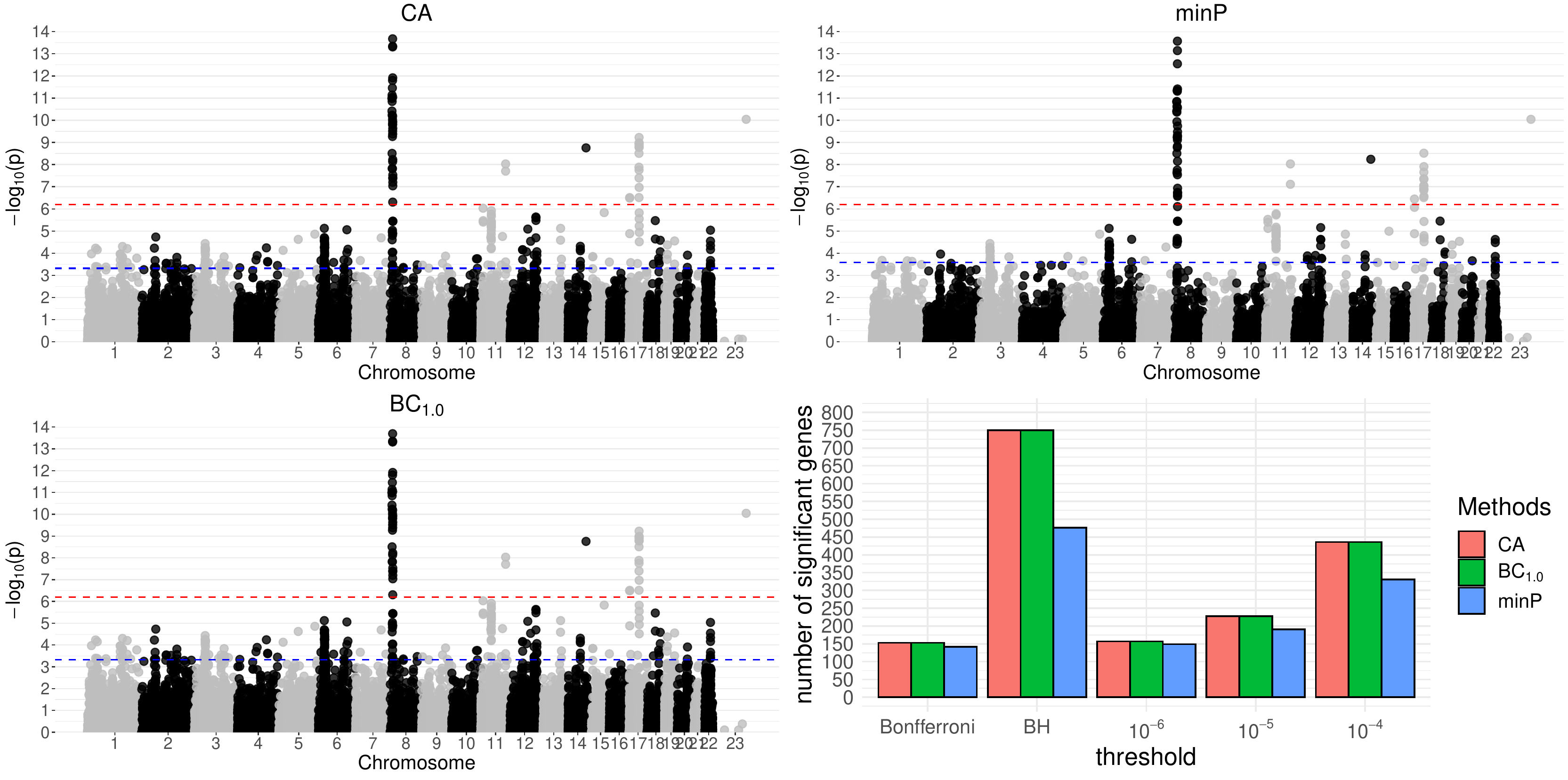}
\caption{Mahattan plots and number of significant $p$-values for $CA$, $BC_1(HM)$ and $minP$. The red dash lines are the cutoffs of Bonferroni correction for $\alpha=5\%$ and the blue dash lines are the cutoffs of Benjamini-Hochberg correction for FDR=$5\%$. The significant regions (FDR=5\%) detected by $HM$ and $CA$ are the same except two regions, DDX58 ($q$=0.0499 by $CA$ and $q$=0.0501 by $HM$) and POU2F3 ($q$=0.0509 by $CA$ and $q$=0.0492 by $HM$).
}
\label{manhattan}
\end{figure}

\section{Discussion}\label{section6}
In this paper, we investigate methods for combining dependent $p$-values using transformation corresponding to regularly varying distribution, which is a rich family of heavy-tailed distribution and includes Pareto distribution (Box-Cox transformation) as a special case. We first present the issue of aggregating multiple $p$-values in three major historical scenarios: (1) classical meta-analysis of combining independent and frequent signals (e.g. Fisher), (2) methods for aggregating independent weak and sparse signals (e.g. $minP$, higher criticism and Berk-Jones), and (3) recent methods for combining $p$-values with sparse signals and unknown dependency structure (i.e. Cauchy and harmonic mean). We then examine popular methods designed for these three settings under the Pareto and regularly varying distribution to provide theoretical insight and finally present the condition of heavy-tailed transformation methods to have the robustness with dependency structure. 

Our contributions are fourfold in both providing theoretical insight and practical application guidelines. Firstly, in Section \ref{s:connection}, we use the family of Box-Cox transformation, or equivalently transformation by CDF of Pareto distributions, to provide connections among Fisher, $CA$, $HM$ and $minP$ methods that are designed to specialize in the three scenarios. We also show that the two recent methods -- $CA$ and $HM$ -- are approximately identical. Secondly, in Section \ref{section3}, we focus on the dependent $p$-value scenario and investigate the condition for $p$-value combination methods under regularly varying distribution to have the robustness to dependency structure, where $CA$ and $HM$ are special cases. We show that only methods of the equivalent class of $CA$ and $HM$ (i.e., index $\gamma=1$) in the regularly varying distribution have the robustness property. Thirdly, we demonstrate an occasional drawback of the Cauchy method when some $p$-values are close to 1, which contributes large negative penalty and causes power loss. We propose a simple, yet practical solution by a truncated Cauchy method with fast and accurate computation. Finally, the simulations and a real GWAS application confirm the theoretical insights and provide a practical guideline for using the harmonic mean and Cauchy methods. Specifically, Table S3 in Section \ref{section4.1} gives guidance of the degree of possible type I error inflation of the harmonic mean method under varying $n$ (number of combined p-values), $\rho$ (correlation level between p-value) and $\alpha$ (test size). 

Modern data science faces challenges from larger data dimension, increased structural complexity, and the need for models and inference to tailor for the subject domain. The three categories of $p$-value combination methods have motivated the development of numerous methods in the literature and is a good example of how statistical theories can provide insight into method development and guide towards real applications. In our paper, we conclude that the condition in regularly varying distribution to have dependency structure robustness in $p$-value combination is those distributions with index $\gamma=1$, which includes Cauchy and harmonic mean methods recently proposed. For future direction, it is of interest whether other methods (e.g. inverse gamma or log-gamma family)  satisfying this condition may enjoy robustness and simultaneously obtain better statistical power in some applications of interest.

\section*{Acknowledgements}
YF and GCT are funded by NIH R21LM012752; CC is funded by Ministry of Science and Technology of ROC 109-2118-M-110-002.
The authors thank Zhao Ren for multiple inspiring discussions.

\section*{Supplementary material}
 Supplementary material includes additional simulation results, as well as details of the efficient importance sampling procedure for the truncated Cauchy method.


\bibliographystyle{apalike} 
\bibliography{references}

\begin{thebibliography}{}

\bibitem[Barnett et~al., 2017]{barnett2017generalized}
Barnett, I., Mukherjee, R., and Lin, X. (2017).
\newblock The generalized higher criticism for testing snp-set effects in
  genetic association studies.
\newblock {\em Journal of the American Statistical Association},
  112(517):64--76.

\bibitem[Berk and Jones, 1979]{berk1979goodness}
Berk, R.~H. and Jones, D.~H. (1979).
\newblock Goodness-of-fit test statistics that dominate the kolmogorov
  statistics.
\newblock {\em Zeitschrift f{\"u}r Wahrscheinlichkeitstheorie und verwandte
  Gebiete}, 47(1):47--59.

\bibitem[Cai et~al., 2014]{cai2014two}
Cai, T.~T., Liu, W., and Xia, Y. (2014).
\newblock Two-sample test of high dimensional means under dependence.
\newblock {\em Journal of the Royal Statistical Society: Series B: Statistical
  Methodology}, pages 349--372.

\bibitem[Chen and Yuen, 2009]{chen2009sums}
Chen, Y. and Yuen, K.~C. (2009).
\newblock Sums of pairwise quasi-asymptotically independent random variables
  with consistent variation.
\newblock {\em Stochastic Models}, 25(1):76--89.

\bibitem[Davis, 1983]{davis1983stable}
Davis, R.~A. (1983).
\newblock Stable limits for partial sums of dependent random variables.
\newblock {\em The Annals of Probability}, pages 262--269.

\bibitem[De~Boer et~al., 2005]{de2005tutorial}
De~Boer, P.-T., Kroese, D.~P., Mannor, S., and Rubinstein, R.~Y. (2005).
\newblock A tutorial on the cross-entropy method.
\newblock {\em Annals of Operations Research}, 134(1):19--67.

\bibitem[Donoho et~al., 2004]{donoho2004higher}
Donoho, D., Jin, J., et~al. (2004).
\newblock Higher criticism for detecting sparse heterogeneous mixtures.
\newblock {\em The Annals of Statistics}, 32(3):962--994.

\bibitem[Edgington, 1972]{edgington1972additive}
Edgington, E.~S. (1972).
\newblock An additive method for combining probability values from independent
  experiments.
\newblock {\em The Journal of Psychology}, 80(2):351--363.

\bibitem[Fisher, 1932]{fisher1932statistical}
Fisher, R. (1932).
\newblock Statistical methods for research workers; oliver, boyd, eds.

\bibitem[Goeman et~al., 2019]{goeman2019harmonic}
Goeman, J.~J., Rosenblatt, J.~D., and Nichols, T.~E. (2019).
\newblock The harmonic mean p-value: Strong versus weak control, and the
  assumption of independence.
\newblock {\em Proceedings of the National Academy of Sciences of the United
  States of America}, 116(47):23382.

\bibitem[Goldie and Kl{\"u}ppelberg, 1998]{goldie1998subexponential}
Goldie, C.~M. and Kl{\"u}ppelberg, C. (1998).
\newblock Subexponential distributions.
\newblock {\em A practical Guide to Heavy Tails: Statistical Techniques and
  Applications}, pages 435--459.

\bibitem[Li et~al., 2015]{li2015higher}
Li, J., Siegmund, D., et~al. (2015).
\newblock Higher criticism: $ p $-values and criticism.
\newblock {\em The Annals of Statistics}, 43(3):1323--1350.

\bibitem[Littell and Folks, 1971]{littell1971asymptotic}
Littell, R.~C. and Folks, J.~L. (1971).
\newblock Asymptotic optimality of fisher's method of combining independent
  tests.
\newblock {\em Journal of the American Statistical Association},
  66(336):802--806.

\bibitem[Liu and Xie, 2019]{liu2019accurate}
Liu, Y. and Xie, J. (2019).
\newblock Accurate and efficient p-value calculation via gaussian
  approximation: a novel monte-carlo method.
\newblock {\em Journal of the American Statistical Association},
  114(525):384--392.

\bibitem[Liu and Xie, 2020]{liu2020cauchy}
Liu, Y. and Xie, J. (2020).
\newblock Cauchy combination test: a powerful test with analytic p-value
  calculation under arbitrary dependency structures.
\newblock {\em Journal of the American Statistical Association},
  115(529):393--402.

\bibitem[Mikosch, 1999]{mikosch1999regular}
Mikosch, T. (1999).
\newblock {\em Regular variation, subexponentiality and their applications in
  probability theory}.
\newblock Eindhoven University of Technology.

\bibitem[Mudholkar and George, 1979]{mudholkar1979logit}
Mudholkar, G.~S. and George, E.~O. (1979).
\newblock The logit method for combining probabilities.
\newblock In {\em Symposium on optimizing methods in statistics}, pages
  345--366. Academic Press New York.

\bibitem[Okbay et~al., 2016]{okbay2016genetic}
Okbay, A., Baselmans, B.~M., De~Neve, J.-E., Turley, P., Nivard, M.~G.,
  Fontana, M.~A., Meddens, S. F.~W., Linn{\'e}r, R.~K., Rietveld, C.~A.,
  Derringer, J., et~al. (2016).
\newblock Genetic variants associated with subjective well-being, depressive
  symptoms, and neuroticism identified through genome-wide analyses.
\newblock {\em Nature Genetics}, 48(6):624--633.

\bibitem[Pearson, 1933]{pearson1933method}
Pearson, K. (1933).
\newblock On a method of determining whether a sample of size n supposed to
  have been drawn from a parent population having a known probability integral
  has probably been drawn at random.
\newblock {\em Biometrika}, pages 379--410.

\bibitem[Shintani and Umeno, 2018]{shintani2018super}
Shintani, M. and Umeno, K. (2018).
\newblock Super generalized central limit theorem—limit distributions for
  sums of non-identical random variables with power laws—.
\newblock {\em Journal of the Physical Society of Japan}, 87(4):043003.

\bibitem[Stouffer et~al., 1949]{stouffer1949american}
Stouffer, S.~A., Suchman, E.~A., DeVinney, L.~C., Star, S.~A., and Williams~Jr,
  R.~M. (1949).
\newblock The american soldier: Adjustment during army life.(studies in social
  psychology in world war ii), vol. 1.

\bibitem[Vovk and Wang, 2020]{vovk2020combining}
Vovk, V. and Wang, R. (2020).
\newblock Combining p-values via averaging.
\newblock {\em Biometrika}, 107(4):791--808.

\bibitem[Willink, 2005]{willink2005bounds}
Willink, R. (2005).
\newblock Bounds on the bivariate normal distribution function.
\newblock {\em Communications in Statistics-Theory and Methods},
  33(10):2281--2297.

\bibitem[Wilson, 2019]{wilson2019harmonic}
Wilson, D.~J. (2019).
\newblock The harmonic mean p-value for combining dependent tests.
\newblock {\em Proceedings of the National Academy of Sciences},
  116(4):1195--1200.

\end{thebibliography}

\appendix
\section{Technical Arguments: Proof of Theorems}
Before we show the technical arguments, we first define some notations. 

Two nonnegative non-identically distributed random variables $Y_1$ and $Y_2$ with distributions $F_1$ and $F_2$, respectively, are said to be asymptotically tailed independent if
\begin{equation}\label{quasi_ind}
   \lim_{t\rightarrow \infty}\frac{P(Y_1>t,Y_2>t)}{\bar{F}_1(t)+\bar{F}_2(t)}=0.
\end{equation}
It suffices to show the asymptotically tailed independent by showing
$P(Y_1>t|Y_2>t)=o(1)$ or $P(Y_2>t|Y_2>t)=o(1)$, or  equivalently,
$P(Y_1>t,Y_2>t)=o(P(Y_1>t))\text{ or }o(P(Y_2>t))$.

More generally, two real-valued random variables, $Y_1$ and $Y_2$, are said to be asymptotically independent if the relation \eqref{quasi_ind} holds with $(Y_1,Y_2)$ in the numerator being replaced by $(Y_1^+,Y_2^+),(Y_1^+,Y_2^-),(Y_1^-,Y_2^+)$, where $Y_{i}^+=\max\left(Y_i,0\right)$ and $Y_{i}^-=\max\left(-Y_i,0\right)$ for $i=$1, 2.\\
In this case, one can show that to prove $Y_1$ and $Y_2$ are asymptotically tailed independent, it suffices to prove that
$P(Y_i^+>t, Y_j^+>t)$, $P(Y_i^+>t, Y_j^->t)$,$P(Y_i^->t,Y_j^+>t)$ are all $o(P(Y_1>t))$ or $o(P(Y_2>t))$. 
\subsection{Proof of Theorem 1}
Before proving Theorem 1, first we introduce two lemmas, Lemma \ref{lemma 1.1} and \ref{lemma 1.2}.
\begin{lemma}\label{lemma 1.1} If $X_1$ and $X_2$ are bivariate standard normally distributed with correlation $-1<\rho<1$, then $|X_1|$ and $|X_2|$ are
asymptotically tailed independent.
\end{lemma}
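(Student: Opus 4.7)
}
The plan is to show that both the numerator $P(|X_1|>t, |X_2|>t)$ decays at a genuinely faster exponential rate than the single-tail denominator $P(|X_1|>t)+P(|X_2|>t)$ whenever $-1<\rho<1$. First, I would decompose the joint tail event into its four quadrant pieces,
\begin{align*}
P(|X_1|>t,|X_2|>t) &= P(X_1>t,X_2>t) + P(X_1<-t,X_2<-t) \\
&\quad + P(X_1>t,X_2<-t) + P(X_1<-t,X_2>t),
\end{align*}
and use the symmetry of the centered bivariate normal to collapse the two ``same-sign'' terms into $2P(X_1>t,X_2>t)$ and the two ``opposite-sign'' terms into $2P(X_1>t,X_2<-t)$.

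Next, rather than quoting a precise bivariate orthant asymptotic, I would apply the elementary one-dimensional bound $\{X_1>t,X_2>t\}\subseteq\{X_1+X_2>2t\}$ and $\{X_1>t,X_2<-t\}\subseteq\{X_1-X_2>2t\}$. Since $X_1+X_2\sim N(0,2(1+\rho))$ and $X_1-X_2\sim N(0,2(1-\rho))$, this reduces everything to univariate Gaussian tails:
\begin{equation*}
P(|X_1|>t,|X_2|>t)\;\le\; 2\bigl[1-\Phi(a_+ t)\bigr] + 2\bigl[1-\Phi(a_- t)\bigr],
\end{equation*}
where $a_\pm=\sqrt{2/(1\pm\rho)}$. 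The hypothesis $-1<\rho<1$ makes both $a_+^2$ and $a_-^2$ strictly greater than $1$, which is the key inequality driving the result.

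Finally, I would divide by $\bar F_1(t)+\bar F_2(t)=4\bigl(1-\Phi(t)\bigr)$ and invoke Mills' ratio, which gives $1-\Phi(at)/(1-\Phi(t))\sim a^{-1}\exp\bigl(-(a^2-1)t^2/2\bigr)$ as $t\to\infty$. Since $a_\pm^2-1>0$, each of the two ratios tends to $0$, and the definition of asymptotic tail independence is verified. The whole argument is essentially two pages of Mills'-ratio bookkeeping once the quadrant decomposition and the linear-combination bounds are in place, so I do not anticipate a substantive obstacle; the one place to be careful is checking that the exponential rate $a_\pm^2>1$ genuinely requires the strict inequalities $\rho<1$ and $\rho>-1$ separately, which confirms that the lemma fails exactly at perfect correlation, consistent with the role played by the condition $-1<\rho_{ij}<1$ in Theorem \ref{thm 1}.
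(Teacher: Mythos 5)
Your proof is correct, but it takes a genuinely different route from the paper's. The paper splits $P(|X_1|>t,|X_2|>t)$ into the same four quadrant terms and then treats them asymmetrically: the same-sign terms are controlled by quoting Willink's bivariate orthant bound $P(X_1>t,X_2>t)\le \Phi(-t)\Phi(-\theta t)(1+\rho)$ with $\theta=\sqrt{(1-\rho)/(1+\rho)}$ (valid for $\rho>0$, with a sign flip $Z_2=-X_2$ to reduce the case $\rho<0$ to it), while the opposite-sign terms are handled by writing $X_2=c_1X_1+c_2Z$ with $Z\perp X_1$ and peeling off an independent factor. Your argument replaces both devices with the single elementary containment $\{X_1>t,X_2>t\}\subseteq\{X_1+X_2>2t\}$ and $\{X_1>t,X_2<-t\}\subseteq\{X_1-X_2>2t\}$, reducing everything to univariate tails of $N(0,2(1\pm\rho))$ and one application of Mills' ratio. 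This is cleaner in three respects: it needs no external citation, it treats all four quadrants and both signs of $\rho$ uniformly, and it makes completely transparent where $-1<\rho<1$ enters (namely $a_\pm^2=2/(1\pm\rho)>1$, with $\rho<1$ and $\rho>-1$ each responsible for one of the two rates). What the paper's conditional-decomposition trick buys instead is reusability: essentially the same $X_i=C_1Z+C_2X_j$ manipulation is recycled in Remark A1.2 to handle events of the form $\{U_i>t, U_j<-t\}$, i.e.\ $\{|X_i|>s_1,|X_j|<s_2\}$ with $s_2$ small, which your linear-combination containment does not directly cover; so if you adopted your proof you would still need a separate argument there. Two small bookkeeping points, neither of which is a gap: your quadrant decomposition is in fact an equality (the four events are disjoint for $t>0$), and your asymptotic $\bigl(1-\Phi(at)\bigr)/\bigl(1-\Phi(t)\bigr)\sim a^{-1}\exp\bigl(-(a^2-1)t^2/2\bigr)$ checks out against the standard Mills bounds, which the paper records as Lemma 5.
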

\begin{proof}Use the upper bound for upper tailed probability of bivariate standard normal random variables. $P(X_1>t, X_2>t)\le \Phi(-t)\Phi(-\theta t)(1+\rho)$ for $t>0$ and $\rho>0$, where $\theta=\sqrt{\frac{1-\rho}{1+\rho}}$ \citep{willink2005bounds}.
We first assume $\rho>0$. When $\rho<0$, let $Z_2=-X_2$. Then $X_1$ and $Z_2$ are bivariate standard normally distributed with correlation $\rho>0$ and 
$P(|X_1|>t,|X_2|>t)=P(|X_1|>t,|Z_2|>t)$. So it suffices to prove the case of $\rho>0$. Now we consider the case where $\rho>0$,
\begin{align}
   &P(|X_1|>t,|X_2|>t) \notag \\
   &\le  P(X_1>t,X_2>t)+P(-X_1>t,-X_2>t)+P(X_1>t,-X_2>t)+P(-X_1>t,X_2<t)\notag \\
   &=I+II+III+IV.\notag
\end{align}
For $I$, we have
$I=P(X_1>t,X_2>t)\le \Phi(-t)\Phi(-\theta t)(1+\rho)=o(P(X_1>t))$.
For $II$, we note
$II=I$ ($X_1$ and $X_2$ are bivariate standard normal random variables, so their joint pdf are symmetric around 0).
For $III$, first let $X_2=c_1X_1+c_2Z$, where $c_1>0$ (because $\rho>0$) and $c_2>0$ and $Z$ is a standard normal random variable independent of $X_1$. Then we have

\begin{align*}
  P(X_1>t,-X_2>t) &= P(X_1>t,-c_1X_1-c_2Z>t) \\
  &= P(X_1>t,-c_2Z>t+c_1X_1) \\
   &\le  P(X_1>t,-c_2Z>t+c_1t)\\
   &= P(X_1>t)P(-c_2Z>t+c_1t)=o(P(|X_1|>t)).
\end{align*}

We then further note $IV=III$ since $X_1$ and $X_2$ are exchangeable. Combine all the results, we have $P(|X_1|>t,|X_2|>t)=o(P(|X_1|>t))$.
\end{proof}

\noindent \textbf{Remark A1.1}: From the Willink's upper bound for bivariate normal r.v.s., it is clear that when $\rho$ is close to 1, we can see the "asymptotically tailed independence phenomenal" only when $t$ is extremely large.

\begin{lemma}[\cite{chen2009sums}]\label{lemma 1.2} If $U_1,\ldots,U_n\in R_{-\gamma}$ are asymptotically tailed independent random variables with CDFs $F_1,\ldots,F_n$, respectively; then $P(U_1+\ldots+U_n>t)\sim\sum_{i=1}^n\bar{F_i}(t)$.\end{lemma}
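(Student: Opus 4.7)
The plan is to prove $P(U_1+\cdots+U_n>t)\sim\sum_{i=1}^n\bar F_i(t)$ by establishing matching liminf and limsup bounds, each driven by the heavy-tailed ``single big jump'' heuristic: because the marginals are regularly varying and pairwise asymptotically tail-independent, the event $\{\sum_i U_i>t\}$ occurs essentially because exactly one $U_j$ is near or above $t$, while configurations in which two or more summands are simultaneously large contribute only negligibly to the tail.

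For the lower bound, I would fix $\epsilon\in(0,1)$ and note
\[\Bigl\{\sum_i U_i>t\Bigr\}\supseteq\bigcup_{j=1}^n A_j,\qquad A_j:=\{U_j>(1+\epsilon)t\}\cap\bigcap_{i\ne j}\{U_i\ge -\epsilon t/(n-1)\}.\]
A Bonferroni-type inclusion-exclusion gives $P(\bigcup_j A_j)\ge\sum_j P(A_j)-\sum_{j<j'}P(A_j\cap A_{j'})$. The intersections $A_j\cap A_{j'}$ force both $U_j$ and $U_{j'}$ to exceed $(1+\epsilon)t$, hence have probability $o(\bar F_j(t)+\bar F_{j'}(t))$ by pairwise asymptotic tail independence on $(U_j^+,U_{j'}^+)$ combined with the regular-variation identity $\bar F(ct)\sim c^{-\gamma}\bar F(t)$. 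Expanding each $P(A_j)$ as $P(U_j>(1+\epsilon)t)$ minus the sign-mixed joint tails $P(U_j>(1+\epsilon)t,\,U_i<-\epsilon t/(n-1))$, each such joint tail is also negligible by the extended tail-independence hypothesis applied to $(U_j^+,U_i^-)$. Letting $\epsilon\to 0$ so that $(1+\epsilon)^{-\gamma}\to 1$ then yields $\liminf_t P(\sum_i U_i>t)/\sum_i\bar F_i(t)\ge 1$.

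For the upper bound, the key deterministic observation is that if $\sum_i U_i>t$ and $\max_i U_i\le(1-\epsilon)t$, then removing the maximum leaves a remainder exceeding $\epsilon t$, forcing some other $U_j>\epsilon t/(n-1)$; together with the maximum itself, this produces at least two indices exceeding $\epsilon t/(n-1)$. Hence
\[\Bigl\{\sum_i U_i>t\Bigr\}\subseteq\bigcup_i\{U_i>(1-\epsilon)t\}\,\cup\,\bigcup_{i<j}\bigl(\{U_i>\epsilon t/(n-1)\}\cap\{U_j>\epsilon t/(n-1)\}\bigr).\]
The first union has probability at most $\sum_i\bar F_i((1-\epsilon)t)\sim(1-\epsilon)^{-\gamma}\sum_i\bar F_i(t)$ by regular variation, while each pairwise event in the second union is $o(\bar F_i(t)+\bar F_j(t))$ by asymptotic tail independence at scale $\epsilon t/(n-1)\to\infty$ combined again with regular variation. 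Summing and sending $\epsilon\to 0$ gives $\limsup_t P(\sum_i U_i>t)/\sum_i\bar F_i(t)\le 1$, matching the lower bound.

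The hard part is the careful bookkeeping of sign combinations and scale mismatches: asymptotic tail independence is stated at a common scale $s\to\infty$ for each of the four pair-types $(U_i^{\pm},U_j^{\pm})$, whereas the decompositions above mix large-positive and large-negative excursions at unequal scales $(1\pm\epsilon)t$ and $\epsilon t/(n-1)$. Reconciling these requires repeatedly invoking $\bar F(ct)\sim c^{-\gamma}\bar F(t)$ together with monotonicity of the joint tail events in the smaller of the two scales, so that the asymptotic independence of each sign-pair can be transferred onto the precise scales appearing in the bounds. Once that transfer is made, the $\epsilon\to 0$ limits through $(1\pm\epsilon)^{-\gamma}\to 1$ close the liminf-limsup gap and deliver the claimed equivalence.
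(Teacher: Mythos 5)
Your proof is correct. Note, however, that the paper itself offers no proof of this lemma: it is stated with an attribution to \cite{chen2009sums} (Theorem 3.1 there) and used as an imported black box, so there is nothing internal to compare against. Your argument is the standard ``single big jump'' proof that underlies the cited result: the lower bound via the Bonferroni inequality on the events $A_j=\{U_j>(1+\epsilon)t\}\cap\bigcap_{i\ne j}\{U_i\ge -\epsilon t/(n-1)\}$, the upper bound via the observation that $\{\sum_i U_i>t\}$ forces either one summand above $(1-\epsilon)t$ or two summands above $\epsilon t/(n-1)$, and in both cases the cross terms killed by pairwise asymptotic tail independence transferred to the relevant scales through $\bar F(ct)\sim c^{-\gamma}\bar F(t)$. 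The two points that most often go wrong in blind attempts --- the sign-mixed pairs $(U_j^+,U_i^-)$ in the lower bound, and the fact that tail independence is stated at a single common scale while the decomposition uses unequal scales --- are both explicitly identified and correctly resolved by monotonicity of the joint tail in the smaller scale plus regular variation. The only cosmetic caveat is that the argument as written needs $n\ge 2$ (the case $n=1$ being trivial) and fixed $n$, which is exactly the regime in which the paper invokes the lemma.
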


\begin{proof}[\textbf{\underline{Proof of Theorem 1}}] First we assume the transformation $g(p)$ is nonnegative. Since $U_i\in R_{-\gamma}, \forall i=1,\ldots,n$, by Lemma \ref{lemma 1.2}, it suffices to prove $U_1, \ldots, U_n$ are pairwise asymptotically tailed independent.  Here we have
\begin{align}\label{thm1proofeq1}
P(U_i>t|U_j>t)&=P(g(p_i)>t|g(p_j)>t)\notag\\
&=P(|X_i|>t^*||X_j|>t^*)\sim o((P(|Xi| > t^*))=o(P(U_i>t)).  
\end{align}
Note that $t^*\rightarrow \infty$ as $t\rightarrow \infty$. 
The second equality is because $g(p)$ and $2(1-\Phi(|X|))$ are both monotone decreasing and continuous. $P(|X_i|>t^*||X_j|>t^*)\sim o((P(|X_i| > t^*))=o(P(U_i>t))$ is because of Lemma \ref{lemma 1.1}.
Therefore, $U_1,\ldots,U_n$ are pairwise asymptotically tailed independent and we complete the proof.
When the transformation $g(p)$ is not nonnegative, see Remark A1.2 for detailed proof.
\end{proof}

\noindent \textbf{Remark A1.2:} As described in the proof, we prove Theorem 1 by assuming the transformation $g(p)$ is nonnegative. In fact, it can be easily extended to real-valued transformation $g(p)$. 
In order to prove the asymptotically tailed independence for the general case, it suffices to prove that $P(U_i^+>t, U_j^+>t)$, $P(U_i^+>t, U_j^->t)$, $P(U_i^->t,U_j^+>t)$ are all $o(P(U_i>t))$ or $o(P(U_j>t))$ as $t \rightarrow \infty$.

First for any $t>0$, $P(U_i^+>t,U_j^+>t)=P(U_i>t,U_j>t)$. We can show that $P(U_i>t,U_j>t)=o(P(U_i>t))$ with the same argument as in (\ref{thm1proofeq1}). 
Therefore $P(U_i^+>t,U_j^+>t)=o(P(U_i>t))$. It remains to prove $P(U_i^+>t,U_j^->t)=o(P(U_i>t))$ since $P(U_i^->t,U_j^+>t)=o(P(U_j>t))$ can be proved similarly. 

First we have $P(U_i^+>t,U_j^->t)=P(U_i>t, -U_j>t)=P(U_i>t,U_j<-t)$ for $\forall t>0$. It suffices to show the result hold for the condition (A2.1) in Theorem 1, otherwise for the alternative condition (A2.2), since $U_j$ is bounded below, we have $P(U_j^->t)=P(U_j<-t)=0$ for large enough $t$, which immediately implies $P(U_i^+>t,U_j^->t)=0$. Now we consider the condition (A2.1), where $g(p)$ is continuous and strictly decreasing for $0<p<1$. Note that for any large fixed $t$, there exist a corresponding large fixed value $s_1$ and a small fixed value $s_2$, such that
\begin{align*}
    \left\{U_i>t\right\}&=\left\{|X_i|>s_1\right\}\\
      \left\{U_j<-t\right\}&=\left\{|X_j|<s_2\right\}.
\end{align*}
Because $X_i$ and $X_j$ are bivariate normal distributed with correlation $|\rho_{ij}|\neq 1$, we let
$X_i=C_1Z+C_2X_j$, where $C_1$ and $C_2$ are some constants, $Z\overset{D}{\sim} N(0,1)$ and independent of $X_j$, and then applying similar trick in the proof of Lemma \ref{lemma 1.1}:
\begin{align*}
    P(U_i^+>T, U_j^->t)&=P(|X_i|>s_1, |X_j|<s_2)\\
    &\le P(|C_1Z|+|C_2X_j|>s_1, |X_j|<s_2)\\
    &\le P(|C_1Z|>s_1-|C_2|s_2, |X_j|<s_2)\\
    &= P(|C_1Z|>s_1-|C_2|s_2)P(|X_j|<s_2)=o(P(|X_j|<s_2))=o(P(U_j^->t))
\end{align*}
note $P(U_j^->t)=O(P(U_j>t))$ by balance condition (A3). Hence we complete the proof.\\\\


\subsection{Proof of Theorem 2:}
\begin{proof}[\textbf{\underline{Proof of Theorem 2}}]
First we prove $w_iU_i$ and $w_jU_j$ for $\forall m+1\le i<j\le n$ are asymptotically tailed independent, where the corresponding $|\rho_{ij}|<1$ for $\forall m+1\le i<j\le n$.
As discussed in the Remark A1.2 for Theorem 1, without loss of generality, we can assume both $U_i$ and $U_j$ are nonnegative random variables.
Suppose $w_i\le w_j$:
\begin{align*}
 P(w_iU_i>t|w_jU_j>t) &=\frac{P(w_iU_i>t,w_jU_j>t)}{P(w_jU_j>t)}  \\
   &\le  \frac{P(w_jU_i>t,w_jU_j>t)}{P(w_jU_j>t)}\rightarrow 0  .
\end{align*}

The last line is because $U_i$ and $U_j$ $\forall m+1\le i<j\le n $ are asymptotically tailed independent which were already proved in Theorem 1.\\
Suppose $w_i>w_j$:
\begin{align*}
  P(w_iU_i>t|w_jU_j>t) &=\frac{P(w_iU_i>t,w_jU_j>t)}{P(w_jU_j>t)} \\
   &\le \frac{P(w_iU_i>t,w_iU_j>t)}{P(w_jU_j>t)}   \\
   &= \frac{P(w_iU_i>t,w_iU_j>t)}{P(\frac{w_j}{w_i}w_iU_j>t)}\\
   &\sim  \frac{P(w_iU_i>t,w_iU_j>t)}{(\frac{w_j}{w_i})^{\gamma}P(w_iU_j>t)}\rightarrow 0
\end{align*}
The last line is because $U_i$ and $U_j$ $\forall m+1\le i<j\le n $ are asymptotically tailed independent and also
because $w_iU_j$ has regular-varying tail with index $\gamma$.\\
Hence we have
\begin{equation*}
    \frac{P(w_iU_i>t,w_jU_j>t)}{P(w_iU_i>t)+P(w_jU_j>t)}\le P(w_iU_i>t|w_jU_j>t)\rightarrow 0.
\end{equation*}
Therefore, $w_iU_i$ and $w_jU_j$ $\forall m+1\le i<j\le n$ are asymptotically tailed independent.

Second, we consider the case with extreme correlation $|\rho_{ij}|=1$. In this case, $X_1=...=X_m$ with probability 1 and hence $U_1=...=U_m$ with probability 1. Therefore, it suffice to show that $(\sum_{i=1}^m w_i)U_1$ and $w_jU_j$,  for $\forall m+1\le j\le n$, are asymptotically tailed independent, since 
$\rho_{i j}=1 \text { or }-1 \text { for } 1 \leq i<j \leq m$.

This can be easily proved by the following inequality:
\begin{align*}
   P\left(\left(\sum_{i=1}^m w_i\right)U_1>t|w_jU_j>t\right)\le\sum_{i=1}^m P(w_iU_1>t/m|w_jU_j>t)\rightarrow 0.
\end{align*}


Therefore, 
\begin{align*}
  P(T_{n,w}(\mathbf{X})>t) &= P(\sum_{i=1}^n w_iU_i>t) \\
   &= P\left(\left(\sum_{i=1}^m w_i\right)U_1+\sum_{i=m+1}^n w_iU_i>t\right) \\
  &\sim\left(\sum_{i=1}^m w_i\right)^{\gamma}P(U_1>t)+\sum_{i=m+1}^n w_i^{\gamma}P(U_i>t)\\
  &=\left[\left(\sum_{i=1}^m w_i\right)^{\gamma}+\sum_{i=m+1}^n w_i^{\gamma}\right]P(U_1>t).
\end{align*}
The third line is because $\left(\sum_{i=1}^m w_i\right)U_1$ and $w_jU_j$, $\forall m+1\le j\le n$, are asymptotically tailed independent and Lemma \ref{lemma 1.2} and because of the property of regularly-varying tailed random variables.
\end{proof}

\subsection{Proof of Theorem 3}
Before proving Theorem 3, we first introduce two lemmas for the proof.
Lemma \ref{Lemma 3.1} is the combination of Theorem 2 and Theorem 3 in \cite{davis1983stable}. 
Below are the conditions for Lemma \ref{Lemma 3.1}:\\
\textbf{(B1)}: Let $U_1^*,\ldots,U_{n^*}^*,\ldots$  stationary sequence of regularly-varying random variables with index $0<\gamma\le 1$ and with common distribution function $F^*$.\\
\textbf{(B2)}: Let $G^*(t)=P(|U_1^*|>t)$. The distribution of $U_1^*$ satisfies the balance condition; that is,
$\frac{1-F^*(t)}{G^*(t)}\rightarrow p$ and $\frac{F^*(-t)}{G^*(t)}\rightarrow q$ as $t\rightarrow\infty$,
where $0\le p\le 1$. and $p+q=1$.\\


In addition to conditions $(B1)$ and $(B2)$, there are three additional conditions $(D)$, $(D')$ and $(D'')$ given in \cite{davis1983stable}, all of which
are assumptions for dependent structure of $U_1^*,\ldots,U_{n^*}^*$, and are required for Lemma \ref{Lemma 3.1}. For the details for conditions $(D)$, $(D')$ and $(D'')$, see \cite{davis1983stable}. We do not provide details of these conditions because they are very technical but obviously satisfied in Theorem 3, as shown in the proof of Theorem 3.

\begin{lemma}[\cite{davis1983stable}]\label{Lemma 3.1}
Suppose conditions (B1), (B2), (D), (D') and (D'') hold. For $0<\gamma\le1$ we have
\[\frac{\sum_{i=1}^{n^*}U^*_i-b_{n^*}}{a_{n^*}}\rightarrow_d S^*_{\gamma},\]
where $S^*_{\gamma}$ is a random variable; $a_{n^*}$ is a term such that ${n^*}G^*\left(a_{n^*}x\right)\rightarrow x^{-\gamma}$ for $0<\gamma\le1$ as ${n^*}\rightarrow\infty$ and $x>0$; $b_{n^*}$ is defined as follows 
$$
b_{n^*}=
\begin{cases}
0, & 0<\gamma<1,\\
{n^*}\int_{-a_{n^*}}^{a_{n^*}}xdF^*(x), & \gamma=1,
\end{cases}
$$

\end{lemma}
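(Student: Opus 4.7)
The plan is to prove Lemma \ref{Lemma 3.1} via the point-process approach that is standard for stable limits of stationary heavy-tailed sequences. Define the normalized point process
\[
N_{n^*} = \sum_{i=1}^{n^*} \delta_{U_i^*/a_{n^*}}
\]
on the punctured real line $\mathbb{R}\setminus\{0\}$. Because $U_1^*$ is regularly varying with index $-\gamma$ and satisfies the balance condition (B2), the choice of $a_{n^*}$ with $n^*\bar{G}^*(a_{n^*}x)\to x^{-\gamma}$ gives $n^* P(U_1^*/a_{n^*}\in\cdot)\to \nu(\cdot)$ vaguely, where $\nu$ is the Lévy measure with density $p\gamma x^{-\gamma-1}\mathbf{1}_{x>0}+q\gamma|x|^{-\gamma-1}\mathbf{1}_{x<0}$. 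The mixing-type hypotheses (D), (D'), (D'') are exactly the ingredients needed to promote this one-dimensional tail convergence to weak convergence of $N_{n^*}$, in the vague topology on Radon measures, to a Poisson random measure $N$ with mean measure $\nu$; condition (D) controls long-range dependence in the tails, while (D') and (D'') rule out asymptotic clustering so the limit point process is simple Poisson rather than a compound Poisson cluster process.

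Next I would convert this point-process convergence into convergence of sums. For $0<\gamma<1$ the sum is, up to small remainders, a continuous functional of the point process: fix $\epsilon>0$ and split
\[
\frac{1}{a_{n^*}}\sum_{i=1}^{n^*} U_i^* = \int_{|x|>\epsilon} x\,N_{n^*}(dx) + \frac{1}{a_{n^*}}\sum_{i=1}^{n^*} U_i^*\mathbf{1}(|U_i^*|\le \epsilon a_{n^*}).
\]
The first term converges, by continuous mapping, to $\int_{|x|>\epsilon} x\,N(dx)$, and one shows the second term is uniformly asymptotically negligible by a second-moment (or $\gamma'$-moment with $\gamma'>\gamma$) bound that uses regular variation and the weak-dependence condition (D''). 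Letting $\epsilon\downarrow 0$ one obtains the Lévy--Itô representation of the $\gamma$-stable law $S_\gamma^*$, with $b_{n^*}=0$ because $\gamma<1$ makes the small-jump sum summable without centering. For $\gamma=1$ the same scheme works but the small-value contribution is not summable, so I would keep the truncation at level $a_{n^*}$ and subtract the mean $b_{n^*}=n^*\int_{-a_{n^*}}^{a_{n^*}} x\,dF^*(x)$; the asymptotic negligibility of the truncated, centered sum then follows from a variance bound using (D'') together with the slowly varying integral $\int_{-a_{n^*}}^{a_{n^*}} x^2\,dF^*(x)$, and the characteristic-function identification yields the $1$-stable limit with the usual logarithmic correction.

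The main obstacle will be the $\gamma=1$ case. Here the sum fails to be a continuous functional of the point process in any neighborhood of zero, and the balance parameters $p,q$ interact with the centering $b_{n^*}$ in a delicate way: one must show that replacing $n^*\mathbb{E}[U_1^*\mathbf{1}(|U_1^*|\le a_{n^*})]$ (which is what naturally appears from the truncation in the continuous-mapping step) by $n^*\int_{-a_{n^*}}^{a_{n^*}} x\,dF^*(x)$ changes the sum only by $o(a_{n^*})$, and one must verify that the dependent small-jump variance matches the i.i.d.\ variance up to $o(a_{n^*}^2)$. This is where conditions (D') and (D'') do the real work, by ensuring that dependent co-occurrences of moderate values are rare enough that the covariance contribution to $\mathrm{Var}\bigl(\sum U_i^*\mathbf{1}(|U_i^*|\le a_{n^*})\bigr)$ is negligible against the diagonal term. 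Once this is established, the characteristic function of the normalized, centered sum matches Feller's canonical form for $S(1,\beta,\gamma_0,\mu)$ with skewness $\beta=p-q$, and the lemma follows by Lévy's continuity theorem.
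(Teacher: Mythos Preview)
The paper does not prove this lemma at all: it is stated as a direct citation of Theorems~2 and~3 of \cite{davis1983stable}, and in the surrounding text the authors only use the \emph{order} conclusion $\sum_{i=1}^{n^*}U_i^*=O_p((n^*)^{1/\gamma}L_{n^*})$ extracted from it (see the remark immediately following the lemma and the proof of Theorem~3). So there is no ``paper's own proof'' to compare against.

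That said, your outline is the right approach and is essentially the strategy of the original Davis paper: one proves vague convergence of the normalized point process $N_{n^*}=\sum_i\delta_{U_i^*/a_{n^*}}$ to a Poisson random measure with intensity determined by the regular-variation index $\gamma$ and the balance parameters $(p,q)$, using the mixing-type conditions (D), (D$'$), (D$''$) to rule out clustering; then one reads off convergence of the sums by truncation and continuous mapping, with the centering $b_{n^*}$ appearing only when $\gamma=1$. Your identification of the $\gamma=1$ case as the delicate one, and of (D$'$)/(D$''$) as the tools controlling the truncated-sum variance, is accurate. For the purposes of this paper, however, you need far less: since Theorem~3 only requires $\sum_i U_i^*=O_p((n^*)^{1/\gamma}L_{n^*})$, and the lemma is applied there only to \emph{i.i.d.}\ subsequences (obtained by splitting along the banded correlation structure), conditions (D), (D$'$), (D$''$) are trivially satisfied and one could equally well cite the classical i.i.d.\ stable limit theorem.
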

The following lemma describes the order of $a_{n^*}$ and $b_{n^*}$ given that some of the conditions of Theorem 3 are satisfied.
\begin{lemma}\label{sublemmaofS3}
If $G^*$, $F^*$ and $U_i^*$ for $i=1,\ldots,n$ satisfy conditions for Lemma \ref{Lemma 3.1} and conditions (A3) and (C2), we have
\begin{align*}
    a_{n^*}&=O(({n^*})^{1/\gamma}L_{n^*})\text{ for $0<\gamma\le 1$}\\
    b_{n^*}&=O({n^*}L_{n^*}) \text{ for $\gamma=1$},
\end{align*}
where $L_{n^*}$ is the power function of $\log n^*$.
\end{lemma}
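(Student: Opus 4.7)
The plan is to derive both orders by first pinning down the size of $a_{n^*}$ directly from its defining relation $n^* G^*(a_{n^*} x) \to x^{-\gamma}$, and then, in the case $\gamma=1$, to reduce the truncated first moment defining $b_{n^*}$ to tail integrals of $\bar F^*$ that can be estimated with the upper bound from condition (C2).

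For the $a_{n^*}$ bound, set $x=1$ to get $G^*(a_{n^*}) \sim 1/n^*$. Under condition (A3) the two one-sided tails of $F^*$ are each asymptotically proportional to $G^*$, so condition (C2) transfers to $G^*$: there exist constants $c_1,c_2>0$ and a threshold $t_1$ so that for all $t>t_1$,
\[
\frac{c_1}{t^{\gamma}(\ln t)^{h}} \;\le\; G^*(t) \;\le\; \frac{c_2 (\ln t)^{h}}{t^{\gamma}}.
\]
Plugging $t=a_{n^*}$ and using $G^*(a_{n^*}) \sim 1/n^*$ yields
\[
c_1\,\frac{n^*}{(\ln a_{n^*})^{h}} \;\le\; a_{n^*}^{\gamma} \;\le\; c_2\, n^* (\ln a_{n^*})^{h}.
\]
A short bootstrap argument — taking the logarithm of the upper bound gives $\ln a_{n^*} \le \gamma^{-1}\ln(c_2 n^*)+(h/\gamma)\ln\ln a_{n^*}$, which forces $\ln a_{n^*} = O(\ln n^*)$ — then lets me substitute back and conclude $a_{n^*} = O\!\big((n^*)^{1/\gamma}(\ln n^*)^{h/\gamma}\big)$, i.e.\ $a_{n^*}=O((n^*)^{1/\gamma} L_{n^*})$ with $L_{n^*}$ a power of $\ln n^*$.

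For $b_{n^*}$ in the case $\gamma=1$, I would split the integral at zero and integrate by parts against the survival function on each side. On the positive side,
\[
\int_{0}^{a_{n^*}} x\, dF^*(x) = -\,a_{n^*}\bar F^*(a_{n^*}) + \int_{0}^{a_{n^*}} \bar F^*(x)\,dx ,
\]
and the upper half of (C2) with $\gamma=1$ gives $\int_{t_1}^{a_{n^*}} \bar F^*(x)\,dx \le \int_{t_1}^{a_{n^*}} \frac{(\ln x)^{h}}{x}\,dx = O\!\big((\ln a_{n^*})^{h+1}\big)$, while $a_{n^*}\bar F^*(a_{n^*}) = O((\ln a_{n^*})^{h})$ and the integral over $[0,t_1]$ contributes a constant. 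The mirror calculation for the negative side uses (A3) to reduce tail estimates of $F^*(-\cdot)$ to the same form. Combining the two halves,
\[
\int_{-a_{n^*}}^{a_{n^*}} x\, dF^*(x) = O\!\big((\ln a_{n^*})^{h+1}\big) = O\!\big((\ln n^*)^{h+1}\big),
\]
so that $b_{n^*} = n^* \cdot O((\ln n^*)^{h+1}) = O(n^* L_{n^*})$, as claimed.

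The main obstacle I expect is not any single estimate but the book-keeping of the polylogarithmic slowly-varying factors: verifying that (C2), stated for the base distribution $\bar F$, transfers with the same exponent $h$ (up to constants) to $\bar F^*$ and to $G^*$ via the balance condition, and then carrying the correct power of $\ln$ through the bootstrap and through integration by parts. Once one commits from the outset to the polylogarithmic class $L_{n^*}=(\ln n^*)^k$ and allows $k$ to absorb the various $h$, $h/\gamma$, and $h+1$ exponents that appear, the remaining steps are routine calculus.
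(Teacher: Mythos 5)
Your proposal is correct and follows essentially the same route as the paper: both transfer condition (C2) to $G^*$ via the balance condition (A3) and exploit the defining relation $n^*G^*(a_{n^*}x)\to x^{-\gamma}$ to bound $a_{n^*}$ (the paper phrases this step as a proof by contradiction where you solve the resulting inequality directly via a logarithmic bootstrap, but the underlying estimate is identical), and both handle $b_{n^*}$ for $\gamma=1$ by splitting the truncated mean at a fixed threshold, integrating by parts against the survival functions, and using (A3) to reduce the negative tail $F^*(-\cdot)$ to $\bar F^*$. No gaps.
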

\begin{proof}

First, we prove
$a_{n^*}=O(({n^*})^{1/\gamma}L_{n^*})$ for $0<\gamma\le 1$.
Suppose $a_{n^*}\neq O(({n^*})^{1/\gamma}L_{n^*})$. Then for any $k>0$, there exits an arbitrary large $n^*$, such that $a_{n^*} > ({n^*})^{\frac{1}{\gamma}}\log^k (n^*)$. Hence we have
\begin{align}
    n^* G^*(a_{n^*}x)&\leq n^* G^*\left((n^*)^{\frac{1}{\gamma}}\log^k (n^*)x\right)\notag\\
    &\leq \frac{C n^* \left(\log\left((n^*)^\frac{1}{\gamma} \log^k (n^*)x\right)\right)^h}{\left((n^*)^{\frac{1}{\gamma}}\log^k (n^*)x\right)^\gamma}\notag\\
    &= \frac{C}{x^\gamma}\cdot\frac{\left(\frac{1}{\gamma}\log(n^*)+k\log\log n^* +\log x\right)^h}{(\log n^*)^{k\gamma}},\label{anorder}
\end{align}
where C and h are some fixed constants. The second inequality is due to conditions (A3) and (C2). Indeed, given the two conditions, we have $G^*(t)\stackrel{\text{(i)}}{\le} C \bar{F}^*(t)\stackrel{\text{(ii)}}{\le} \frac{C(\log (t))^h}{t^\gamma}$, where (i) is due to balance condition (A3) and (ii) is due to condition (C2). 
By choosing $k$ such that $k\gamma >h$, we have (\ref{anorder}) $\rightarrow 0$ for $\forall x>0$, which immediately leads to contradiction since by definition of $a_{n^*}$ we have $n^* G^*(a_{n^*}x)\rightarrow \frac{1}{x^\gamma}$.

Then we prove $b_{n^*}=O(n^* L_{n^*})$ for $\gamma=1$. Since conditions (A3) and (C2) hold, we can choose a large enough constant $M$, such that, 
\begin{align*}
&\bar{F}^*(t)\le \frac{(\log(t))^h}{t} \text{ for $\forall t>M$.}\\
    &F^*(-t)\le c\bar{F}^*(t),
\end{align*}
where $c$ and $h$ are fixed some constants. By the definition of $b_{n^*}$, we have
\begin{align*}
    b_{n^*}= n^*\int_{-a_{n^*}}^{a_{n^*}}x d F^*(x)=\underbrace{ n^*\int_{-a_{n^*}}^{-M}x d F^*(x)}_{I}+\underbrace{n^*\int_{-M}^{0}x d F^*(x)}_{II}+\underbrace{n^*\int_{0}^{M}x d F^*(x)}_{III}+\underbrace{n^*\int_{M}^{a_{n^*}}x d F^*(x)}_{IV}
\end{align*}
For $II$ and $III$, we have
$II\le n^*\int_{-M}^0 MdF^*(x)\le n^*M=O(n^*)$ and  $III\le n^*\int_{0}^M MdF^*(x)\le n^*M=O(n^*)$. For $I$, we have 
\begin{align*}
    I= n^*\int_{-a_{n^*}}^{-M} xdF^*(x)=\underbrace{n^*(-M)F(-M)+n^*a_{n^*}F(-a_{n^*})}_{\text{(i)}}-\underbrace{n^*\int_{-a_{n^*}}^{-M} F^*(x)dx}_{\text{(ii)}},
\end{align*}
where (i) is $O\left(n^*L_{n^*}\right)$ since by (A3) we have $n^*a_{n^*}F(-a_{n^*})\le a_{n^*} c n^*\bar{F}^*(a_{n^*})\le c_1 a_{n^*}n^*G^*(a_{n^*})=O\left(n^*L_{n^*}\right)$, where the last equality is due to the fact that $n^*G^*(a_{n^*}x)\rightarrow \frac{1}{x}$ for any $x>0$ and $a_{n^*}=O({n^*}L_{n^*})$ when $\gamma=1$. For (ii), we have 
\begin{align*}
    \text{(ii)}=n^*\int_{-a_{n^*}}^{-M} F^*(x)dx=n^*\int^{a_{n^*}}_{M} F^*(-y)dy&\le n^*\int^{a_{n^*}}_{M} c\bar{F}^*(y)dy\\
    &\le n^*\int^{a_{n^*}}_{M} c \frac{(\log y)^h}{y}dy\\
    &=O\left(n^*(\log(a_{n^*}))^{h+1}\right)=O(n^*L_{n^*}).
\end{align*}
Hence we have $I=O(n^*L_{n^*})$. For $IV$, we have
\begin{align*}
    \left|IV\right|=\left|n^*\int_{M}^{a_{n^*}}xdF^*(x)\right|&=\left|n^*\int_{M}^{a_{n^*}}xd(1-\bar{F}^*(x))\right|=\left|n^*\int_{M}^{a_{n^*}}xd\bar{F}^*(x)\right|\\
    &=\left|n^*a_{n^*}\bar{F}_{a_{n^*}}-n^*M\bar{F}^*(M)-n^*\int_{M}^{a_{n^*}}\bar{F}^*(x)dx\right|\\
    &\le \left|n^*a_{n^*}\bar{F}_{a_{n^*}}\right|+\left|n^*M\bar{F}^*(M)\right|+\left|n^*\int_{M}^{a_{n^*}}\bar{F}^*(x)dx\right|\\
    &\le \left|n^*\int_{M}^{a_{n^*}}\frac{(\log(x))^h}{x}dx\right|+O(n^*L_{n^*}),
    \end{align*}
where the last inequality is due to the fact $n^*a_{n^*}\bar{F}^*(a_{n^*})\le c_1a_{n^*}n^*G^*({a_{n^*}})=O(n^*L_{n^*})$ given (A3) and definition of $a_{n^*}$. Also note that $\left|n^*\int_{M}^{a_{n^*}}\frac{(\log(x))^h}{x}dx\right|=\left|n^*\frac{\log(a_{n^*})^{h+1}}{h+1}-n^*\frac{\log(M)^{h+1}}{h+1}\right|=O(n^*L_{n^*})$. Hence we have $IV=O(n^*L_{n^*})$ and further $b_{n^*}=O(n^*L_{n^*})$

\end{proof}
\noindent \textbf{Remark A1.3:} Lemma \ref{Lemma 3.1} and Lemma \ref{sublemmaofS3} suggest that for the regularly varying variables $U_1^*,\ldots U_{n^*}^*$ with index
$0<\gamma\le 1$, $\sum_{i=1}^{n^*} U_i^*=O({n^*}^{1/\gamma}L_{n^*})$.
For example, for $CA$ test, its corresponding $a_{n^*}=\frac{2{n^*}}{\pi}$ and $b_{n^*}=0$; for $HM$ test, $a_{n^*}={n^*}$ and $b_{n^*}={n^*}\ln({n^*})$;
for $BC_{\eta}$ test ($\eta=1/\gamma$, $0<\gamma<1$), $a_{n^*}=({n^*})^{1/\gamma}$.
The distribution of $S^*_{\gamma}$ is dependent on $\gamma$ and described in details in Theorem 2 and Theorem 3 in \cite{davis1983stable}.
For the purpose of this paper, we will only need to use the order of $\sum_{i=1}^nU_i^*$,
which is $O_p(({n^*})^{1/\gamma}L_{n^*})$ ($0<\gamma\le 1$).



%


 Lemma \ref{Lemma 3.2} and \ref{sublemmaofC2}  are useful when characterizing the lower bound of $g(p)$.
\begin{lemma}[ratio inequality of Mill]\label{Lemma 3.2}
For any $x>0$,
\begin{align*}\frac{x}{\phi(x)}\le1/(1-\Phi(x))\le \frac{x}{\phi(x)}\frac{1+x^2}{x^2},\end{align*}
where $\Phi(x)$ and $\phi(x)$ are CDF and pdf of standard normal distribution, respectively.
\end{lemma}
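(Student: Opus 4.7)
The plan is to take reciprocals throughout and rewrite the claim in the equivalent ``tail bound'' form
\[
\frac{x\phi(x)}{1+x^2}\;\le\;1-\Phi(x)\;\le\;\frac{\phi(x)}{x},\qquad x>0,
\]
which is the standard Mill's ratio package. The two inequalities will be handled separately, and I expect the lower bound to be the main obstacle since the upper bound is immediate from a comparison trick.

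For the upper bound, I would write $1-\Phi(x)=\frac{1}{\sqrt{2\pi}}\int_x^\infty e^{-t^2/2}\,dt$ and exploit the elementary inequality $1\le t/x$ on the range $t\ge x$. This gives
\[
1-\Phi(x)\;\le\;\frac{1}{\sqrt{2\pi}}\int_x^\infty \frac{t}{x}\,e^{-t^2/2}\,dt\;=\;\frac{1}{x\sqrt{2\pi}}\,e^{-x^2/2}\;=\;\frac{\phi(x)}{x},
\]
closing the upper half of the claim with one line.

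For the lower bound, I would introduce the auxiliary function $h(x):=(1+x^2)(1-\Phi(x))-x\phi(x)$ and aim to show $h(x)\ge 0$ for all $x>0$. Differentiating and using $\phi'(x)=-x\phi(x)$, the $x^2\phi(x)$ terms cancel and I expect to get
\[
h'(x)\;=\;2\bigl[x(1-\Phi(x))-\phi(x)\bigr].
\]
The bracket is non-positive precisely by the upper bound already proved, so $h$ is non-increasing on $(0,\infty)$. Then I would verify that $h(x)\to 0$ as $x\to\infty$ (both $(1+x^2)(1-\Phi(x))$ and $x\phi(x)$ have the common asymptotic $x\phi(x)$, so their difference vanishes; this can be made rigorous either from the upper bound $1-\Phi(x)\le \phi(x)/x$ squeezed with a routine refinement, or from the standard asymptotic expansion $1-\Phi(x)=\phi(x)/x+O(\phi(x)/x^3)$). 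A non-increasing function that tends to $0$ at $+\infty$ must be non-negative, yielding $h(x)\ge 0$ and therefore the lower bound.

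The only subtle step is confirming $\lim_{x\to\infty}h(x)=0$; one clean way is to integrate by parts once, i.e.\ write
\[
1-\Phi(x)\;=\;\frac{\phi(x)}{x}-\int_x^\infty \frac{\phi(t)}{t^2}\,dt,
\]
which together with the crude bound $\int_x^\infty t^{-2}\phi(t)\,dt\le \phi(x)/x^3$ gives $1-\Phi(x)=\phi(x)/x+O(\phi(x)/x^3)$, from which $(1+x^2)(1-\Phi(x))-x\phi(x)=O(\phi(x)/x)\to 0$. Combining the monotonicity of $h$ with this limit completes the lower bound and hence the lemma.
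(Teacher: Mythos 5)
Your proposed proof is correct. Note, however, that the paper itself offers no proof of this lemma at all: it is stated as the classical Mills-ratio inequality and used as a known fact, so there is no in-paper argument to compare against. Your self-contained derivation is sound on every step: the reciprocal reformulation $\frac{x\phi(x)}{1+x^2}\le 1-\Phi(x)\le \frac{\phi(x)}{x}$ is an equivalent restatement since all quantities are positive for $x>0$; the upper bound via $1\le t/x$ on $[x,\infty)$ and $\int_x^\infty t\phi(t)\,dt=\phi(x)$ is the standard one-line argument; and for the lower bound the derivative computation $h'(x)=2\left[x(1-\Phi(x))-\phi(x)\right]\le 0$ (using $\phi'(x)=-x\phi(x)$, with the $x^2\phi(x)$ terms cancelling exactly as you predict) together with $\lim_{x\to\infty}h(x)=0$ forces $h\ge 0$. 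The one step you flagged as subtle — the limit — is correctly handled by the integration-by-parts identity $1-\Phi(x)=\phi(x)/x-\int_x^\infty \phi(t)t^{-2}\,dt$ with the remainder bounded by $\phi(x)/x^3$, which gives $h(x)=\phi(x)/x-(1+x^2)\int_x^\infty\phi(t)t^{-2}\,dt\to 0$. This is a clean, elementary, and complete proof of a result the paper leaves as a citation; it would be a reasonable addition if the authors wanted the appendix to be self-contained.
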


\begin{lemma}\label{sublemmaofC2}
If conditions (A2), (A3) and (C2) hold, then we have
the following two inequalities for the transformation $g(p)$.\\
There exist $p_1>0,C_1>0,k\ge0$ such that for $0<p<p_1$
\[g(p)\ge \frac{C_1}{p^{1/\gamma}|\ln(p)|^{k}}.\]
and there exist $p_2>0, C_2>0$, $k\ge 0$ such that for $p_2<p<1$
\[g(p)\ge \frac{-C_2|\ln(1-p)|^{k}}{(1-p)^{1/\gamma}}.\]
\end{lemma}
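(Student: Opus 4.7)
The plan is to derive both bounds from the key identity $\bar F(g(p)) = p$, which holds because $p_i \sim \text{Unif}(0,1)$ under the null and $g$ acts as the (generalized) inverse of $1 - F_U$ on the domain where it is strictly decreasing. For the first inequality, I would use the lower bound of condition (C2) evaluated at $t = g(p)$ to write
\[
p \;=\; \bar F(g(p)) \;\ge\; \frac{1}{g(p)^{\gamma}\,(\ln g(p))^{h}},
\]
which yields $g(p)^{\gamma} \ge 1/(p(\ln g(p))^{h})$. This still has $\ln g(p)$ in the denominator, so the next task is to bound $\ln g(p)$ in terms of $|\ln p|$.

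For that I would use the upper bound of (C2): $p \le (\ln g(p))^{h}/g(p)^{\gamma}$. Taking logarithms gives $\gamma\ln g(p) - h\ln\ln g(p) \le |\ln p|$. Because monotonicity together with $\bar F(g(p)) = p \to 0$ forces $g(p) \to \infty$ as $p \to 0$, the term $h\ln\ln g(p)$ is negligible compared with $\ln g(p)$ for small $p$, so a one-step iteration yields $\ln g(p) \le (2/\gamma)|\ln p|$ for $p$ below some threshold $p_1$. Plugging this back into $g(p)^{\gamma}\ge 1/(p(\ln g(p))^{h})$ produces $g(p) \ge C_1/(p^{1/\gamma}|\ln p|^{k})$ with $k = h/\gamma$, as required.

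For the second inequality the argument splits on the behavior of $g$ near $1$. If condition (A2.2) holds, or if (A2.1) holds with $g$ bounded below, the inequality is immediate for $p$ sufficiently close to $1$ since the right-hand side tends to $-\infty$ while $g(p)$ stays bounded. The only substantive case is (A2.1) with $g(p)\to -\infty$ as $p \to 1^{-}$. Set $t = -g(p) > 0$. Monotonicity of $g$ gives $F(-t) = P(U_i < -t) = P(p_i > p) = 1-p$. To obtain an upper bound on $t$ I combine the balance condition --- which gives $F(-t)\le G(t) \le (2/\pi_0)\bar F(t)$ for $t$ large, where $\pi_0>0$ denotes the limit $\bar F(t)/G(t)\to\pi_0$ in (A3) --- with the upper bound of (C2), producing
\[
1-p \;=\; F(-t) \;\le\; \frac{2}{\pi_0}\,\frac{(\ln t)^{h}}{t^{\gamma}}.
\]
The same logarithmic-iteration device as in the first inequality then yields $t \le C_2 |\ln(1-p)|^{k}/(1-p)^{1/\gamma}$ with $k = h/\gamma$, so that $g(p) = -t \ge -C_2 |\ln(1-p)|^{k}/(1-p)^{1/\gamma}$.

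The main obstacle is the implicit iteration needed to control $\ln g(p)$ (and symmetrically $\ln t$) by $|\ln p|$ (resp.\ $|\ln(1-p)|$): without it the bounds delivered by (C2) read circularly in $g(p)$. Once this one-step iteration is in place the rest is routine algebra. A secondary point requiring care is correctly splitting the second inequality into the bounded-below case (where it is vacuous) and the unbounded case (where the balance condition is essential to bring $F(-t)$ back in line with $\bar F(t)$).
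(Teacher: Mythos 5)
Your proof is correct and rests on the same ingredients as the paper's: the identity $\bar F(g(p))=p$ (i.e.\ $g=\bar F^{-1}$ on the relevant range), both sides of (C2) to control the implicit logarithmic factor, the balance condition (A3) to reduce $F(-t)$ to $\bar F(t)$ for the left tail, and the trivial disposal of the bounded-below case. The only difference is presentational: the paper argues by contradiction through $\bar F^{-1}$, while you bootstrap $\ln g(p)\lesssim|\ln p|$ directly and substitute back --- the same algebra, arranged as a direct argument.
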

\begin{proof}
To prove the first statement.
Let $t=g(p)$, by condition (A2), $g(p)$ is strictly decreasing for small enough $p$, hence $g^{-1}(t)$ exists for large enough $t$ and is also strictly decreasing. 
Note for any large fixed $t$, we have $F(t)=P(g(p)\le t)=P(p\geq g^{-1}(t))=1-g^{-1}(t)$, hence $\bar{F}(t)=g^{-1}(t)$ for large enough $t$ and further $g(p)=\bar{F}^{-1}(p)$ for small enough $p$, where we have $\bar{F}^{-1}(\bar{F}(t))=t$ for large enough $t$. We now prove the first statement by contradiction, assume for any $k>0$, there exists an arbitrary small $p$ such that $g(p)=\bar{F}^{-1}(p)<\frac{1}{p^{1/\gamma}|\log p|^k}$, which leads to the following contradiction:
\begin{align*}
    t=\bar{F}^{-1}\left(\bar{F}(t)\right)&\le \bar{F}^{-1}\left(\frac{1}{t^\gamma|\log t|^h}\right)\\
    &<\frac{\left(t^{\gamma}|\log t|^h\right)^{\frac{1}{\gamma}}}{\left|\log\left(t^{-\gamma}|\log t|^{-h}\right)\right|^k}\\
    &= \frac{t}{|\log t|^{-\frac{h}{\gamma}}\left(\gamma \log t+h\log\log t\right)^k}<t \text{ by choosing large enough k,}
\end{align*}
 where $h\geq 0$ are some fixed constants. The first inequality is due to condition (C2) and that $\bar{F}^{-1}(p)$ is strictly decreasing for small enough $p$. The second inequality is due to our assumption  $g(p)=\bar{F}^{-1}(p)<\frac{1}{p^{1/\gamma}|\log p|^k}$ for an arbitrary small $p$. Given this contradiction, the proof of the first statement is completed.
 
 We then prove the second statement. First note that when $g(p)$ is bounded below, then the statement is trivial. Since condition (A2) hold for $g(p)$, we only need to prove the statement when $g(p)$ is strictly decreasing for $0<p<1$, because it is trivial for the case $g(p)$ is bounded below and one can note $\frac{-C_2|\ln(1-p)|^{k}}{(1-p)^{1/\gamma}}\rightarrow-\infty$ as $p$ goes to one. 
 
 Now we consider the case where $g(p)$ is strictly decreasing for $0<p<1$. In this case, by similar arguments when we prove the first statement, we denote $t=g(p)$ again and easily note that $g^{-1}(t)$ exists and further $g(p)=\bar{F}^{-1}(p)$ for $0<p<1$, where $\bar{F}^{-1}(\bar{F}(-t))=-t$.  
 
 We now prove the second statement by contradiction.  Given this observation and previously defined notations, by assuming for any $k>0$ there exists an arbitrary small $p$ such that $\bar{F}^{-1}(p)<-C_2 \frac{|\log(1-p)|^k}{(1-p)^{1/\gamma}}$, we derive the following contradiction:
 \begin{align*}
     -t= \bar{F}^{-1}(\bar{F}(-t))&=\bar{F}^{-1}(1-F(-t))\\
     &\le \bar{F}^{-1}\left(1-c_3\frac{|\log t|^h}{t^{\gamma}}\right)\\
     &<-C_2\frac{\left|\log c_3\frac{|\log t|^h}{t^\gamma}\right|^k}{\left(c_3 \frac{|\log t|^h}{t^\gamma}\right)^{1/\gamma}}\\
     &= -t \times C_2 \frac{|\log c_3 -\gamma \log t +h \log \log t|^k}{c_3 (\log t)^{\frac{h}{\gamma}}}< -t \text{ by choosing large enough $k$.}
 \end{align*}
 The first inequality is due to the fact that $\bar{F}^{-1}(p)$ is strictly decreasing and the inequality $F(-t)<c_3\bar{F}(t)\le c_3 \frac{|\log t|^h}{t^\gamma}$ for large enough $t$ and some constants $c_3>0$ and $h\geq 0$, which can be proved given conditions (A3) and (C2) hold. The second inequality is due to our assumption $\bar{F}^{-1}(p)<-C_2 \frac{|\log(1-p)|^k}{(1-p)^{1/\gamma}}$. Given this contradiction, the proof of the second statement is completed.
\end{proof}
\noindent \textbf{Remark A1.4:} One can show that some common transformations $g(p)$ previously discussed satisfy the inequalities above. 
Indeed, the Box-Cox transformation $g(p)=\frac{1}{p^{1/\gamma}}$ satisfies condition (C2). For Cauchy's method, since the corresponding transformation $g(p)=\tan\{(0.5-p)\pi\}$ satisfies
$\lim_{p\rightarrow 0}\frac{g(p)}{1/p}=\frac{1}{\pi}$ and $\lim_{p\rightarrow 1}\frac{g(p)}{\frac{-1}{\pi(1-p)}}=1$, it also satisfies condition (C2). For truncated Cauchy method, since $g(p)=\tan \{(0.5-p) \pi\}$ when $p\leq 1-\delta$, again we have $\lim _{p \rightarrow 0} \frac{g(p)}{1 / p}=\frac{1}{\pi}$, also note when $p>1-\delta$, $g(p)=\tan \{(\delta-0.5) \pi\}$, hence $\lim _{p \rightarrow 1} \frac{g(p)}{\frac{-1}{(1-p) }}=0$, we also have truncated Cauchy satisfied.

\begin{proof}[\textbf{\underline{Proof of Theorem 3}}]

For this theorem, we only consider $0<\gamma\le 1$.
Since $\mathbf{X}$ has banded correlation matrix (condition ($C1$)), we can split $U_1,\ldots,U_n$ into $d_0+1$ groups.
Because we are only looking for the order of asymptotic distribution of $\sum_{i=1}^n U_i$,
we can assume $n$ is a multiple of $d_0+1$ and
let $\frac{n}{d_0+1}-1=n_0$.
Let the divided $d_0+1$ groups as
$\{U_1,U_{(d_0+1)+1},\ldots, U_{(d_0+1)n_0+1}\};$ $\{U_2,U_{(d_0+1)+2},\ldots,U_{(d_0+1)n_0+2}\}$; $\ldots;$ $\{U_{d_0+1},U_{(d_0+1)+d_0+1},U_{(d_0+1)n_0+d_0+1}\}$.
For the $i$th group, the random variables $\{U_i,U_{(d_0+1)+i},\ldots, U_{(d_0+1)n_0+i}\}$ are
identically distributed and independent and hence are stationary.
Also, they are random variables with regularly-varying tails with index $\gamma$ that satisfy
conditions $(A2)$ and $(A3)$. Thus conditions $(B1)$ and $(B2)$ hold.
In addition, since they are independent, it is obvious conditions $(D)$, $(D')$ and $(D'')$ in \cite{davis1983stable} for dependent structure hold.
Let $S_i=\sum_{j=0}^{n_0}{U_{j(d_0+1)+i}},i=1,\ldots,d_0+1$. 
Since $d_0$ is fixed, by applying Lemma \ref{Lemma 3.1} and \ref{sublemmaofS3}, we obtain that $S_i$ is $O_p(n^{1/\gamma}L_n)$.
Therefore, $T(\mathbf{X})=\sum_{i=1}^nU_i=S_1+\ldots+S_{d_0+1}$ is also  $O_p(n^{1/\gamma}L_n)$.
Therefore, now it suffices to prove that under alternative hypothesis $H_a$, $\frac{T(\mathbf{X})}{n^{1/\gamma}L_n}$ converges to $\infty$ with probability 1. Note that, 
\begin{align*}
T(\textbf{X})&=\sum_{i=1}^n g(p_i)=\sum_{i=1}^n g(2(1-\Phi(|X_i|)))\\
                 &=\sum_{i\in S}g(2(1-\Phi(|X_i|)))+\sum_{i\in S^c}g(2(1-\Phi(|X_i|)))\\
                 &= \sum_{i\in S}g(2(1-\Phi(|X_i|)))+O_p(n^{1/\gamma}L_n)\\
                 &\ge  g(2(1-\Phi(\max_{i\in S}|X_i|)))+(n^{\beta}-1)g(2(1-\Phi(\min_{i\in S}|X_i|))) +O_p(n^{1/\gamma}L_n),
%
\end{align*}
where $S=\left\{i: \mu_i\neq 0\right\}$ and $S^c$ is the complementary index set of $S$. The equality in the third line is due to Lemma \ref{Lemma 3.1} and \ref{sublemmaofS3}.  We claim that the if the second term $(n^{\beta}-1)g(2(1-\Phi(\min_{\{i\in S\}}|X_i|)))$ in the last line is negative, its
magnitude is $o_p(n^{1/\gamma})$.

Let $\epsilon_n$ be constant such that $\epsilon_n>0$ and $\epsilon_n\rightarrow 0$ as $n\rightarrow\infty$.
We have
\begin{align*}
  P(\min_{i\in S}|X_i|<\epsilon_n) &\le \sum_{i\in S}P(|X_i|<\epsilon_n)=n^{\beta}P(|X_i|<\epsilon_n)  \\
  & =n^{\beta}\{\Phi(\mu_0+\epsilon_n)-\Phi(\mu_0-\epsilon_n)\}\le2\phi(\mu_0-\epsilon_n)n^{\beta}\epsilon_n\le n^{\beta}\epsilon_n.
\end{align*}
Apply Lemma \ref{sublemmaofC2} we have for small value of $\epsilon_n>0$,
\begin{align}
    g(2(1-\Phi(\epsilon_n)))&\ge \frac{-C^{1}|\log(2\Phi(\epsilon_n)-1)|^{k}}{(2\Phi(\epsilon_n)-1)^{1/\gamma}}\label{theorem3eq1}
\end{align}
Note that $2\Phi(\varepsilon_n)-1=2(\Phi(\varepsilon_n)-\Phi(0))=2(\phi(0)\varepsilon_n+o(\varepsilon_n))=\varepsilon_n(1+o(1))$,
then we have
\begin{align*}
   |\log(2\Phi(\epsilon_n)-1)|^{k}&=|\log(\varepsilon_n(1+o(1)))|^{k}\le 2^k|\log  \varepsilon_n|^k\\
   (2\Phi(\epsilon_n)-1)^\frac{1}{\gamma}&=(\varepsilon_n(1+o(1)))^{\frac{1}{\gamma}}\ge 2^{-\frac{1}{\gamma}}\varepsilon_n^{\frac{1}{\gamma}}.
\end{align*}
Then for the right hand side of (\ref{theorem3eq1}), we have
$$(\ref{theorem3eq1}) \ge -\frac{2^kC^1|\log \varepsilon_n|^k }{2^{-\frac{1}{\gamma}}\varepsilon_n^{\frac{1}{\gamma}}}
    =-C^0\frac{|\ln(\epsilon_n)|^{k}}{\epsilon_n^{1/\gamma}},$$
where $C^{1}>0,C^{0}>0$ are constants.
Now we let $\epsilon_n=n^{\beta_0-1}$, where $\beta<\beta_0<1/2$. Then we have
\[P(\min_{i\in S}|X_i|<\epsilon_n)\le n^{\beta}\epsilon_n=n^{\beta+\beta_0-1}=o(1).\]
We also have
\[n^{\beta}g(2(1-\Phi(\epsilon_n)))\ge -C^{0}n^{\beta-(\beta_0-1)(1/\gamma)}|\ln(n^{\beta_0-1})|^{k}.\]
So we prove that $(n^{\beta}-1)g(2(1-\Phi(\min_{\{i\in S\}}|X_i|)))$ is $o_p(n^{1/\gamma})$.

Then it suffices to prove that $\frac{g(2(1-\Phi(\max_{i\in S}|X_i|)))}{n^{1/\gamma}L_n}$ converges to $\infty$ with probability 1.
Let $S_{+}=\{i \in S, \mu_i>0\}$. Denote $X_i=\mu_0+Z_i$ for $i \in S_+$, where $\mu_0=\sqrt{2\tau\log n}$ and $Z_i\overset{D}{\sim} N(0,1)$. Without loss of generality we assume $|S_{+}|\geq s/2$.
Under the assumption of banded correlation for $X_1,\ldots,X_n$, it follows from Lemma 6 in \cite{cai2014two} that $\max_{i\in S_{+}}Z_i\geq \sqrt{2\log|S_{+}|}+o_p(1)$.
Then we have 
\begin{align*}
    \max_{i\in S}|X_i|\geq\max_{i\in S_{+}}|X_i|\ge \mu_0+\max_{i\in S_{+}}Z_i\geq \mu_0+\sqrt{2\log|S_{+}|}+o_p(1).
\end{align*}
Hence we have
\begin{align*}
  &g(2(1-\Phi(\max|X_i|)))\geq\frac{C_1}{(2(1-\Phi(\max|X_i|)))^{\frac{1}{\gamma}}|\log(2(1-\Phi(\max|X_i|)))|^k}\\
  &\geq\frac{C_1}{(1-\Phi(\max|X_i|))^{\frac{1}{\gamma}-\delta}}+o_p(1)\\
  &\geq C_2\max_{i\in S}|X_i|^{\frac{1}{\gamma}-\delta}\exp\{(\frac{1}{\gamma}-\delta)\max_{i\in S}|X_i|^2/2\}+o_p(1)\\
  &\geq C_2(\sqrt{2\log|S_{+}|}+\mu_0)^{\frac{1}{\gamma}-\delta}\exp\{(\frac{1}{\gamma}-\delta)(\log|S_{+}|+\mu_0^2/2+\mu_0\sqrt{2\log|S_{+}|})\}+o_p(1)  \\
  &\geq \exp\{(\frac{1}{\gamma}-\delta)(\log|S_{+}|+\mu_0^2/2+\mu_0\sqrt{2\log|S_{+}|})\}+o_p(1)\\
  &\geq C_3\exp\{(\frac{1}{\gamma}-\delta)(\beta\log(n)+\tau\log(n)+\sqrt{2\tau\log(n)}\sqrt{2\beta\log(n)-2\log(2)})\}+o_p(1)\\
  &\geq C_3\exp\{(\frac{1}{\gamma}-\delta)(\beta\log(n)+\tau\log(n)+\sqrt{2\tau\log(n)}\sqrt{2\beta\log(n)}-\sqrt{2\log(2)})\}+o_p(1)\\
  &\geq C_3\exp\{(\frac{1}{\gamma}-\delta)(\beta\log(n)+\tau\log(n)+\sqrt{2\tau\log(n)}\sqrt{2\beta\log(n)})\}+o_p(\exp{(\frac{1}{\gamma}-\delta)\sqrt{2\tau\log(n)}})\\
  &\geq C_3\exp\{(\frac{1}{\gamma}-\delta)(\log(n)(\sqrt{\beta}+\sqrt{\tau})^2)\}+o_p(\exp{\sqrt{2\tau\log(n)}}) \\
  &= C_3n^{(\frac{1}{\gamma}-\delta)(\sqrt{\tau}+\sqrt{\beta})^2}+o_p(\exp{\sqrt{2\tau\log(n)}}).
\end{align*}
Note that $\delta$ in the second line is a small positive number.
Note that the inequality in the first line is due to Lemma \ref{sublemmaofC2}.
The inequality in the second line is because $|\log(p)|^k$ is smaller than $p^{-\delta}$ for any positive number $\delta$ when $p$ is small and because $\max|X_i|$ goes to infinity with probability 1.
The inequality in the third line is due to Lemma \ref{Lemma 3.2}
Since $\sqrt{\tau}+\sqrt{\beta}>1$, we can choose $\delta$ so small that $(\frac{1}{\gamma}-\delta)(\sqrt{\tau}+\sqrt{\beta})^2>\frac{1}{\gamma}$.
Therefore, the proof is complete.
\end{proof}
\noindent \textbf{Conclusion:} When $\gamma\le 1$ and $0<\beta<1/4$ (very sparse signal), the decision boundary for test statistic $T(\mathbf{X})$ is optimal.
\section{Results related to truncated Cauchy method}
\subsection{Proof of Proposition 4}
\begin{proof}
Define the following random variables,
\begin{align}
Y_i=X_i1(X_i\geq \nu_\delta) +\nu_\delta1(X_i<\nu_\delta)   \;\; i=1,\ldots, n. \notag
\end{align}

Here $X_i's$ identically and independently follow standard Cauchy distribution, and recall that  $\nu_\delta=\tan\left(\pi(\delta-\frac{1}{2})\right)$ for $0<\delta<1$. Define index set $\mathcal{I}=\left\{k: X_k< \nu_\delta \right\}$ and let $m=|\mathcal{I}|$, the cardinality of $\mathcal{I}$, then under the null,  we can rewrite the upper tail probability of truncated Cauchy method's test statistic in the following form:
\begin{align}
P\left(\frac{1}{n}\sum_{i=1}^n Y_i\geq t\right)=\sum_{j=0}^{n}  P\left(\frac{1}{n}\sum_{i=1}^n Y_i\geq t,\; m=j\right).  \notag
\end{align}
Given the above equivalent form, the tail probability can be divided into the two parts below, which will be bounded in the following proof: 
\begin{align}
 I=P\left(\frac{1}{n}\sum_{i=1}^n Y_i\geq t,\; m=0\right),    \notag\\
 II=\sum_{j=1}^{n}P\left(\frac{1}{n}\sum_{i=1}^n Y_i\geq t,\; m=j\right). \notag
\end{align}
For $I$, we have
\begin{align}
 I= P\left(\frac{1}{n}\sum_{i=1}^n Y_i\geq t,\;X_1,\ldots, X_n \geq \nu_\delta\right) \leq  P\left(\frac{1}{n}\sum_{i=1}^n X_i\geq t\right)=P\left(X_1\geq t\right). \notag
\end{align}
For $II$, note that for the terms  $P\left(\frac{1}{n}\sum_{i=1}^n Y_i\geq t,\; m=j\right)\;\; \text{for}\;\; j=1,\dots,n-1$, we have

\begin{align}
 P\left(\frac{1}{n}\sum_{i=1}^n Y_i\geq t,\; m=j\right)&= \binom{n}{j}P\left(\frac{1}{n-j}\sum_{i=1}^{n-j} X_i\geq \frac{nt-j\nu_\delta}{n-j}\;,m=j\right) \notag\\
 &\leq \binom{n}{j}\left(P\left(X_n<\nu_\delta\right)\right)^j P\left(X_1\geq t\right). \notag
\end{align}

Since $t>0$ and $\nu_\delta<0$, $P\left(\frac{1}{n}\sum_{i=1}^n Y_i\geq t,\; m=n\right)=0$. Hence we have, by the binomial theorem,
\begin{align}
    P\left(\frac{1}{n}\sum_{i=1}^n Y_i\geq t\right)=I+II&\leq P(X_1\geq t)+\sum_{j=1}^n\binom{n}{j}\left(P\left(X_n<\nu_\delta\right)\right)^j P\left(X_1\geq t\right)\notag\\
    &\leq P\left(X_1\geq t\right)\left(1+P(X_n<\nu_\delta)\right)^n.\notag
\end{align}

Notice $\tan\left(\pi(\frac{1}{2}-p)\right)$ follows standard Cauchy distribution, hence  $P(X_n<\nu_\delta)=\delta$, then the result follows.
\end{proof}

\label{lastpage}

\end{document}